\newtheorem{theorem}{Theorem}
\newtheorem{lemma}{Lemma}
\newtheorem{proposition}[theorem]{Proposition}
\renewenvironment{proof}[1][Proof]{\begin{trivlist}
\item[\hskip \labelsep {\bfseries #1}]}{\end{trivlist}}
\newcommand{\R}{\mathbb{R}}
\newcommand{\E}{\mathbb{E}}
\newcommand{\ol}{\overline}
\renewcommand{\qed}{\nobreak \ifvmode \relax \else
      \ifdim\lastskip<1.5em \hskip-\lastskip
      \hskip1.5em plus0em minus0.5em \fi \nobreak
      \vrule height0.75em width0.5em depth0.25em\fi}
\title{
Elections, Protest, and Alternation of Power\footnote{
Previous versions of this paper were presented at the 2nd Annual Meeting of the European Political Science Association and the Elections and Political Order conference at Emory University. Many thanks to participants in these seminars, as well as Jos\'{e} Fern\'{a}ndez Albertos, Scott Gelbach, Alastair Smith, and Milan Svolik for comments and suggestions.}
}
\author{Andrew T. Little\footnote{Department of Politics, New York University, andrew.little@nyu.edu.}
\and
Joshua A. Tucker\footnote{Department of Politics, New York University, joshua.tucker@nyu.edu.}
\and
Tom LaGatta\footnote{Courant Institute of Mathematical Sciences, New York University, lagatta@cims.nyu.edu}
}
\date{January 2013}
\begin{document}
\begin{titlepage}

\maketitle

\thispagestyle{empty}
\begin{abstract}
	Despite many examples to the contrary, most models of elections assume that rules determining the winner will be followed. We present a model where elections are solely a public signal of the incumbent popularity, and citizens can protests against leaders that do not step down from power. In this minimal setup, rule-based alternation of power as well as ``semi-democratic'' alternation of power independent of electoral rules can both arise in equilibrium. Compliance with electoral rules requires there to be multiple equilibria in the protest game, where the electoral rule serves as a focal point spurring protest against losers that do not step down voluntarily. Such multiplicity is possible when elections are informative and citizens not too polarized. Extensions to the model are consistent with the facts that protests often center around accusations of electoral fraud and that in the democratic case turnover is peaceful while semi-democratic turnover often requires citizens to actually take to the streets.
\end{abstract}
\end{titlepage}

\doublespacing
\indent

\section{Introduction}

Why do incumbent politicians ever cede office voluntarily?  After all, most models of voting begin with the assumption of that politicians are office seeking, so why do we then assume that these same actors will simply give up power because of an election result?  Of course, we know incumbents often do give up power after elections, as George H.W. Bush, Nicholas Sarkozy, Slobodan Milosevic and Eduard Shevardnaze can readily attest.  However, most models tend to separate the mechanism by which the former two relinquished power (voluntarily after losing an election) from the latter two (at the behest of protesters in the street following fraudulent elections that they or their party ostensibly won).
Highlighting the importance of this distinction, when Shevardnaze's successor Mikheil Saakashvili admitted defeat for his party following Georgia's 2012 parliamentary election, it was hailed as the first democratic transfer of power in the Caucasus. 

In this paper, we present a single model that encompasses both of these types of incumbent turnover. To do so, we treat elections not as a binding contract, but solely as a public signal of incumbent popularity.
The signal -- which contains noise from multiple sources and can include manipulation via electoral fraud -- is observed by citizens and the incumbent, who then decides whether to stay in office or relinquish power voluntarily. If not, citizens then have the opportunity to take to the streets to try to force the incumbent to step down. 

Our main result is that -- despite the minimal role assigned to the election -- both rule based (or democratic) alternation in power and non-rule-based (or semi-democratic) alternation of power can all arise in equilibrium \emph{from the same model}. By rule-based alternations in power, we are referring to cases in which the incumbent stays in power only if a formal electoral threshold is reached.  However, this is not because the incumbent inherently respects the rule, but because such behavior \emph{emerges endogenously in equilibrium}.  This is possible when citizens can choose between multiple equilibria for some election results, one with a ``high-protest" strategy and another with a ``low-protest" strategy.  While playing the high protest strategy and inducing the incumbent to step down if and only if the election result is above the legal threshold is not the only equilibrium that could be chosen, it is a convenient focal point for selecting among multiple equilibria. Moreover, if a codified legal threshold (e.g., $50\%$ of the vote) acts as a focal point determining when citizens play the high protest strategy, then the elections can appear \emph{democratic}: a candidate who ``loses" the election (i.e., receives a share of the vote below $50\%$) will cede power without protest.\footnote{Of course, compliance with electoral rules is not a sufficient condition for democracy, but it is certainly necessary and arguably the most important requirement beyond holding elections in the first place.}  A key finding from our model is that these rule-based alternations in power tends to be possible when elections are very informative and citizens not too polarized.

When these conditions do not hold, there is a unique equilibrium in the protest stage, which precludes the focal point effect required for rule-based alternation. Still, the incumbent may step down in this case if the anticipated level of protest is sufficiently high, and this will tend to happen when a lower than expected election result generates public information that citizens are unhappy with the regime. That is, the incumbent may step down if the election result is below a critical threshold, but in a ``non-rule based" manner as this critical threshold need not correspond to a legally defined electoral rule.  Intuitively, this can describe a situation in which an election is held, an incumbent announces that she has indeed cleared the legal threshold for re-election, protest ensues, and the incumbent ends up stepping down despite a ``winning" election result.  We can think of this as a form of turnover in ``semi-democratic" systems: elections are held, but turnover is neither ensured nor prohibited based on the announced result of the election. Instead the key determinant of turnover is whether citizens take to the streets following an election or whether, anticipating that citizens will take to the streets (or continue protesting), the incumbent steps down.

In addition, we present two extensions of the model which generate results consistent with important features of post-election protest. First,  we incorporate uncertainty about how much fraud was committed, but give citizens a public signal of the level of fraud, which could correspond to media coverage or reports from international monitoring groups. Signals indicating high levels of fraud make citizens more apt to take to the streets, consistent with observed behavior.  Second, we allow the incumbent to step down either before or after the protest. In the democratic equilibrium, and losers are sure to face high protest if the do not yield and hence step down right away. However, in the equilibrium with semi-democratic turnover, the incumbent tends to be more uncertain about how much protest there will be and will often ``wait things out'', only stepping down if the protest is in fact large. This distinction is consistent with the fact that democratic turnover tends to be peaceful while semi-democratic turnover often requires citizens to actually take to the streets. 
 
The rest of the paper is organized as follows. Section \ref{sec.past} briefly reviews related work. Section \ref{sec.model} provides a basic overview of the model. Section \ref{sec.unique} analyzes the unique equilibrium case, and Section \ref{sec.multiple} the multiple equilibrium case. Sections \ref{sec.monitor} and \ref{sec.beforeafter} present the extensions incorporating uncertainty about the level of fraud and the option to step down before or after protest,  and section \ref{sec.conclusion} concludes.

\section{Past Work}
\label{sec.past}

The possibility or realization of leaders relinquishing power following an election plays a central role in prominent theoretical and operational definitions of democracy \citep{P1991, PACL2000, CGV2010}. However, most formal theories of democratization do not emphasize alternation of power, but decisions such as expanding the franchise \citep{AR2000, BDMS2009} or the decision to hold elections \citep{C2009, ielec}. Models that do consider the decision to step down from office implicitly rely on equilibrium selection for the enforcement of electoral rules \citep{P2005, F2011}, whereas here we focus on the existence of multiple equilibria explicitly. 

Our model builds on recent work that analyzes various aspects of elections by treating them as public signals \citep{LV2006, C2009, ES2011, ielec, ggelec}.\footnote{We do not explicitly model why incumbent regimes invite monitoring \citep{K2008, H2012, fmne}, choose a particular level of fraud \citep{GP2009, S2011, fmne}, or hold elections or set up legislatures in the first place \citep{G2006, Gan2008, ielec, GK2011}, as these questions have been addressed elsewhere. Further, fraudulent elections with monitoring are so pervasive in contemporary countries that we are not losing many examples by considering a model where elections occur that simultaneously feature both fraud and some sort of election monitoring \citep{MK2009, GLO2009, S2011, H2012}.} Several recent models examine post-election protest in an informational framework \citep{Kuhn2012, SC2012, R2012}, but treat the opposition and/or citizenry as a unitary actor, abstracting away from the coordination problem central to our argument.

\citet{ES2011} and \citet{ggelec} begin with the same premise as we do here -- where an election result is a public signal in an incomplete information coordination game -- but only focus on the unique equilibrium case and do not allow the incumbent to step down to avoid protest, and hence there is little overlap with our main results. Models with incomplete information and coordination -- generally called ``global games'' -- are becoming common in the study of political phenomena such as riots \citep{A2000} and revolutions \citep{BDM2010a, SB2011}.\footnote{There is no commonly accepted technical definition for what constitutes a global game. To point to a common reference, the payoff structure of our model meets the listed assumptions in section 2.2.1 of \citet{MS2003}, but given the prior and election result citizens do not have a laplacian belief about the incumbent popularity before receiving their private signal. In fact, this public information is precisely the reason why we do not always obtain uniqueness unlike the main result in that section.} 

Modeling the post-election strategic interaction with a global game is particularly valuable in our setting because such games may or may not have multiple equilibria, and it is generally straightforward to derive conditions under which there are multiple equilibria. In particular, we argue that coordination and multiple equilibria are central to understanding why electoral rules can be enforced by equilibrium behavior, a point made by various authors about the rule of law in general \citep{C1995, W1997, MMP2001, H2003, M2008}.\footnote{\citet{ielec} argues for the importance of multiple equilibria in understanding consolidated democracy, but takes the multiplicity as a given and treats noncompetitive elections and elections resembling consolidated democracy in different models.}  Thus the model here is the first along these lines to consider a game that sometimes has a unique equilibrium, with a focus on when multiple equilibria are present and hence when particular rules (in our context, electoral rules) are enforceable. 


The relationship between elections, electoral fraud, and protest has been the focus of a good deal of scholarly interest following the run of ``colored revolutions" in post-communist countries over the last decade \citep{BW2011}.  In particular, \citet{T2007} illustrates how electoral fraud can be a potent focal point for solving collective action problems among citizens living under abusive regimes, thus helping to justify our focus in this model on \emph {post-election} protests.\footnote{Meirowitz and Tucker (2013) add a dynamic element to this approach, but they model the protester as a single agent and not as a collective of agents as we do here.} Moreover, in comparative assessments of the factors that lead to successful colored revolutions, multiple authors highlight the importance of election results as either galvanizing or deflating an opposition movement \citep{M2005, BW2011}.  Even more recently, the reaction of the Russian opposition to Putin's unexpectedly strong election results in 2012 (as opposed to weaker results in the 2011 parliamentary elections from the ruling United Russia party) has been noted by academic bloggers as having a potential deflationary effect on the nascent Russian opposition movement.\footnote{http://themonkeycage.org/blog/2012/03/05/russia-2012-presidential-election-post-election-report/}

\section{The Model}
\label{sec.model}

The actors in the model are an incumbent denoted $I$, and $N$ citizens, indexed by $j$.\footnote{We could also think of this group of citizens as a collection of elites who have the power to oust the regime or bring other citizens (followers) out into the street. For the sake of clarity, however, we refer to these people simply as citizens.} Notation referring to the incumbent will often have a superscript $I$, and notation for the citizens will generally have a superscript $C$. We analyze the model for a finite but arbitrarily large number of citizens, i.e., as $N \to \infty$. Practically speaking, even large post-election protests only include a small fraction of the total population, so the citizens modeled here are better conceptualized as those that could plausibly protest.\footnote{Consider for example the recent Russian post-election protests of 2011-12.  Even if we accept the high end of estimates for the number of Russians who participated in these protests at between 250,000-300,000 people, that is still less than 1\% of the population of a country with over 140 million people.}

The sequence of moves is:
\begin{center}
	\includegraphics{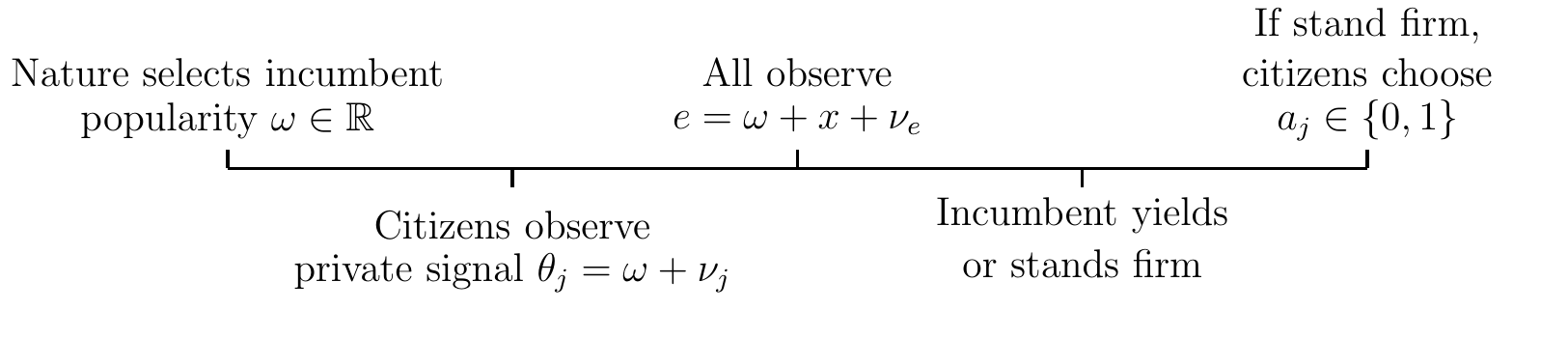}
\end{center}

We break down a citizen's support for (or approval of) a current leader/government into two factors: the average assessment of the ``performance" of the leader -- analogous to an approval rating in a public opinion poll -- and an idiosyncratic component capturing whether or not that particular individual likes the leader more or less than average.  Thus we model each citizen as having a personal level of \emph{regime sentiment} $\theta_j$, which is written as the average incumbent \emph{popularity}, denoted $\omega$, plus an idiosyncratic term that varies by citizen ($\nu_j$). As formalized below, citizens with negative or anti-regime sentiment ($\theta_j < 0$) will generally want to protest to force the leader to step down if necessary, while those with positive or pro-regime sentiment ($\theta_j > 0$) will not want to protest against the regime. As citizens only directly observe their own regime sentiment, we will often refer to $\theta_j$ as citizen $j$'s \emph{private signal}. At the beginning of the game, all actors share a common prior belief about the average popularity (specified below).

The election result ($e$) is simply a public signal of the average popularity ($\omega$). The result is contaminated by fraud (should it occur), denoted $x$, and random noise $\nu_e$. The random component accounts for any factors that might affect the election result independent of incumbent popularity and electoral fraud, e.g., uncertainty over how closely those turning out to vote resemble the population at large (or the potential protesters).  For now, the amount of fraud $x$ is fixed and known; an extension will examine uncertainty about the amount of fraud. Assume the final result takes an additive form: $e = \omega + x + \nu_e$.\footnote{An unsatisfying aspect of this specification is that the election result can be any real number, as opposed to representing something more concrete like the incumbent vote share on $[0,1]$. However, all of the analysis here would apply (with clunkier notation) should the citizens instead observe a vote share $s(e)$ where $s: \R \to [0,1]$ is a strictly increasing function.} 

After observing the election result, the incumbent then chooses whether to step down. The incumbent makes this decision regardless of whether or not she ``wins'' the election.\footnote{We refer to the incumbent with the pronoun ``she'' and citizens with ``he.''} In fact, a major feature of the model is that we do not assume any notion of winning into the payoffs or any other aspect of the model; the election is simply a signal of incumbent popularity, and a noisy one at that. If the incumbent does not step down, the citizens decide whether or not to protest. Let $a_j$ denote the decision to protest or not for citizen $j$, where $a_j = 1$ means protesting and $a_j = 0$ not protesting. Denote the proportion of protesting citizens with $\rho \equiv \sum_{j=1}^N a_j/N$. 


The incumbent payoff is:
\[
u^I(\rho) = \begin{cases}
	y^I & \text{if stepping down}\\
	1 - \rho & \text{otherwise}
\end{cases}
\]

That is, $y^I$ is the incumbent's payoff for yielding; if she does not yield, she instead gets a payoff that is linearly decreasing in the size of protests.\footnote{Here and throughout the paper we use ``linear'' when ``affine'' would be more precise. That is, for the incumbent utility as a function of $\rho$ to be linear in the strict sense it would need to be the case that $u^I(0)=0$, which is not true.} Assume $0  < y^I < 1$, so  there are some levels of protest where the incumbent prefers to step down and some where she prefers to stand firm.


We see citizens as having the following traits. First, they want a high level of protest against unpopular leaders (i.e., they want to see these leaders out of office). Second, they are more likely to join these protests the more people they believe will take part, both because of the information conveyed by larger numbers of protesters (i.e., that the incumbent is less popular) and because the cost to joining a larger protest is lower. Third, citizens also enjoy protesting (i.e., get an ``expressive'' payoff) against leaders that they personally dislike. Finally, we assume that citizens are uncertain about how much other citizens dislike the regime and hence their propensity to protest.  It is against this backdrop of uncertainty that the election result provides  information about the beliefs of others by providing a signal of incumbent popularity. Formally, then, the citizen payoffs are:\footnote{Some important aspects of protest that we abstract from include the role of opposition elites' strategies \citep{BW2011}, how events other than elections can signal incumbent popularity \citep{T2007}  or the possibility of learning across successive rounds of protest in a single country \citep{MT2013} or cross nationally \citep{BW2011, A1999}.}
\begin{align*}
	u^C (a_j; \theta_j, \rho, \omega) = \begin{cases}
		y^C & \text{incumbent steps down} \\
		v(\omega)(1 - \rho)   & \text{incumbent stands firm and }a_j = 0\\
		v(\omega)(1 - \rho) - c - k_1 (1 - \rho) - b \theta_j& \text{incumbent stands firm and } a_j = 1
	\end{cases}
\end{align*}

If the incumbent steps down the game ends, and all citizens get a ``yielding'' payoff $y^C$. If the incumbent does not step down, citizens get a partial payoff of $v(\omega) (1 - \rho)$ regardless of the action taken, where $v(\omega)$ is increasing in $\omega$ and can be positive or negative.
This captures the assumption that all citizens want there to be high level of protest when the incumbent popularity is low ($v(\omega) < 0$) and want there to be a low level of protest when the incumbent popularity is high ($v(\omega) > 0$).  This does not affect the equilibrium strategies when $N \to \infty$, as the individual citizen's effect on $\rho$ becomes negligible. However, this term will play an important role in the analysis of why a majoritarian electoral rule -- when possible -- may be a particularly good equilibrium for the citizens.

The $c > 0$ term is a fixed cost of protest independent of how many citizens join. That is, $c$ is intended to reflect costs such as lost wages or the discomfort of standing in the cold. The $k_1 (1 - \rho)$ term is a variable cost term, which captures the idea that it is safer to join a larger protest.  For example, Bashar Assad's regime in Syria reportedly ordered snipers to shoot 10 participants per protest in the recent uprising; with this number fixed increasing the denominator made coming out to the streets less dangerous.\footnote{See http://www.juancole.com/2012/08/the-age-of-mass-killing-comes-to-syria-france-pushes-govt-in-exile.html}


Finally, citizens get an ``expressive'' benefit of protest equal to their regime sentiment multiplied by a negative constant $-b < 0$. When $\theta_j$ is negative -- i.e., citizen $j$ dislikes the regime -- this payoff is positive and hence all else equal citizen $j$ wants to protest, when $\theta_j$ is positive he prefers to not protest.
 As high values of $b$ have a multiplicative effect on underlying differences in how much citizens like or dislike the incumbent -- or, loosely speaking, how much they dislike the incumbent relative to opposition parties -- a useful way to think about this parameter is that it captures how \emph{polarized} society is.

We include this somewhat nonstandard expressive payoff for a combination of technical and substantive reasons. Technically, the possibility of a unique equilibrium relies on ``two-sided limit dominance'': i.e., citizens who sufficiently dislike the regime always protest and those who sufficiently like the regime never protest. A more common way to attain this property is for citizens receiving extreme signals to be certain that the protest will ``succeed'' by reaching some critical mass even if none of the citizens modeled themselves join.\footnote{\citet{BDM2011} uses a different payoff structure where uncertainty about payoffs -- as used here -- only results in one-sided limit dominance. This is not because the uncertainty is about payoffs \emph{per se}, but because no signal gives payoffs that give a dominant strategy to participate (protest) as it can here.}   However, in the context of protesting after an election, we find it less intuitively plausible that citizens with extreme anti-regime beliefs protest because they think the regime is inevitably going to fall rather than because they are simply extremely unhappy with the regime and willing to take a costly action in pursuit of bringing down the regime.



We solve for Perfect Bayesian Equilibria with some additional common restrictions elaborated below. Relying on sequential rationality, we first we solve the ``protest stage'' that occurs if the incumbent does not step down, and then determine when the incumbent steps down.\footnote{As the incumbent has no private information, citizens do not make inferences about her popularity from her decision.}

\subsection*{The Protest Stage}

Formally, the citizen strategy is a mapping from the election result and her private signal (i.e., personal anti-regime sentiment) to the decision to protest or not. As is standard, we assume at the outset that citizens use a symmetric strategy of the natural form ``protest if and only if $\theta_j < \hat{\theta}(e)$''; i.e., if and only if their personal distaste for the regime is sufficiently strong.
Crucially, this threshold of distaste may be affected by the election result.

When decided whether to take to the streets, each citizen is uncertain about $\rho$: as the other citizens' regime preferences ($\theta_j$'s) are private, he does not know how many others dislike the regime enough to protest. Given the linearity of citizen payoffs, the protest decision depends only on the \emph{expected} proportion of protesters; 
we denote this as $\E[\rho|\cdot]$. As $N \to \infty$, this proportion is unaffected by individual decisions, i.e.,  $\lim_{N \to \infty} \E[\rho|a_j = 1, \cdot] = \lim_{N \to \infty} \E[\rho|a_j = 0, \cdot]$.\footnote{In other words, no citizens are ``pivotal'' in the outcome of the protest, see \citet{smith2011} for a related model with more emphasis on pivotality.} So as $N \to \infty$ citizen $j$ protests in equilibrium  if and only if:
\begin{align}
\notag	v(\omega)(1 - \E[\rho|\cdot]) &\leq v(\omega)(1 - \E[\rho|\cdot]) - c - k_1 (1 - \E[\rho|\cdot]) - b \theta_j\\
\label{eqmtheta}	\theta_j &\leq \frac {-k_1 - c + k_1 \E[\rho|\cdot]} {b}
\end{align}
Since the expected level of protest must be between 0 and 1, we can place bounds on the RHS of equation \ref{eqmtheta}. So, citizens that sufficiently dislike the regime ($\theta_j \leq \frac {-k_1 - c} {b}$) have a dominant strategy to protest and those that sufficiently like the regime ($\theta_j \geq \frac {-c} {b}$) have a dominant strategy to not protest. That is, some citizens who will protest and some who will not protest regardless of their conjecture about what others will do.\footnote{This holds even if the expressive component of the utility function is arbitrarily small (i.e., for any $b > 0$).} 

For those observing signals that do not give a dominant strategy, the optimal protest decision depends on the belief about how many other citizens are going to protest. This requires additional assumptions on the distributions of the incumbent popularity and the noise terms. To greatly simplify the analysis, we make the following distributional assumptions:
\begin{enumerate}
\singlespacing
	\item the incumbent popularity ($\omega$) is normally distributed with mean $\mu_0$ and precision $\tau_0$ (that is, variance $1/\tau_0$),
	\item the noise term in the election result ($\nu_e$) is normally distributed with mean 0 and precision $\tau_e$, 
	\item the noise terms in the private signals ($\nu_j$'s) is normally distributed with mean 0 and precision $\tau_{\theta}$, and
	\item all of the primitive random variables ($\omega$, $\nu_e$, and the $\nu_j$'s) are independent.\footnote{The assumption that the idiosyncratic component to the private signals ($\nu_j$'s)  is independent across individuals is potentially problematic. We might expect that these signals would be more strongly correlated among citizens that are ``close'' to each other, either geographically (especially when there is a strong regional component to politics, such as in Ukraine) or through social ties (e.g., some candidates may be more popular among adherents of particular religions); see \citet{dahleh2012global} for a global games model incorporating such social networks of information exchange.  Still, we do not see how a more complicated information structure would change our main conclusions, and hence proceed with the more tractable formulation.}
\end{enumerate}
\doublespacing



The equilibrium condition is that when all other citizens use cutoff rule $\hat{\theta}(e)$, a citizen observing a private signal equal to the cutoff rule -- or, the \emph{marginal citizen} -- believes the expected proportion of protesters is just large enough to make them indifferent between protesting and not. Citizens observing a lower signal will get a higher expressive payoff for protest and believe that more citizens are going to protest, making protest optimal. Citizens observing a higher private signal will get a lower expressive payoff from protest and believe that fewer citizens will protest, making not protesting optimal. As derived in the appendix, the cutoff rule as a function of the election result $\hat{\theta(e)}$ must meet the following indifference condition:
\begin{align}
	\label{eqmfinal}
\underbrace{\frac {b \hat{\theta}(e) + c + k_1} {k_1}}_{\text{Indifference Protest Level}} = 	 \underbrace{\Phi\left(  \frac {(\tau_0 + \tau_e) \hat{\theta}(e) - \tau_0 \mu_0 - \tau_e (e - x)}
	{\tilde{\tau}^{-1/2} (\tau_0 + \tau_e  + \tau_{\theta})} \right)}_{\text{Expected Protest Level}}
\end{align}
where $\Phi(\cdot)$ is the cumulative density function (CDF) of a standard normal random variable and $\tilde{\tau}$ is a function of the primitive $\tau$ terms, defined in the appendix. The RHS of equation \ref{eqmfinal} represents the expected proportion of protesters for the marginal citizen with regime sentiment $\theta_j = \hat{\theta}(e)$. This is increasing in $\hat{\theta}(e)$ because higher thresholds mean citizens are more apt to protest. The LHS of equation \ref{eqmfinal} represents the expected level of protest required to make the marginal citizen indifferent between protesting and not. This is also increasing in $\hat{\theta}(e)$, as increasing the threshold means the marginal citizen likes the regime more (or dislikes the regime less). As elaborated below, for each election result there will be exactly one or three $\hat{\theta}(e)$ satisfying this condition and hence either a unique equilibrium or three potential equilibria. We first analyze the unique equilibrium case.


\section{The Unique Equilibrium Case and Semi-Democratic Turnover}
\label{sec.unique}
The size of protest as a function of the election result and other parameters has some intuitive comparative statics: 

\begin{proposition}
	\label{unique} When there is a unique equilibrium in the protest stage, the size of protest conditional on the election result $(\E[\rho|e])$ is:\\
 	i) decreasing in the cost of protesting against the incumbent ($c$ and $k_1$), and\\
	ii) decreasing in the election result ($e$)
\end{proposition}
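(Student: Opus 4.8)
The plan is to reduce both comparative statics to the behavior of the equilibrium cutoff $\hat{\theta}(e)$ defined by \eqref{eqmfinal}, since the conditional expected protest level is just a monotone transform of it. First I would record that in the $N \to \infty$ limit, conditioning on the realized popularity $\omega$ the proportion of protesters is pinned down by the law of large numbers: a citizen protests iff $\theta_j = \omega + \nu_j < \hat{\theta}(e)$, so $\rho = \Phi\!\big(\sqrt{\tau_\theta}\,(\hat{\theta}(e) - \omega)\big)$. Integrating against the Gaussian posterior of $\omega$ given the election result (which by standard normal updating has mean $(\tau_0\mu_0 + \tau_e(e-x))/(\tau_0+\tau_e)$ and fixed precision) gives $\E[\rho \mid e] = \int \Phi\!\big(\sqrt{\tau_\theta}\,(\hat{\theta}(e) - \omega)\big)\, dF(\omega \mid e)$. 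Two facts then do all the work: this expression is strictly increasing in $\hat{\theta}(e)$ (the integrand is increasing in $\hat\theta$ pointwise), and, holding $\hat\theta$ fixed, it is strictly decreasing in $e$, because raising $e$ shifts the posterior of $\omega$ up in the first-order stochastic dominance sense while the integrand is decreasing in $\omega$.

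Next I would sign the response of the cutoff to the parameters using the implicit function theorem on \eqref{eqmfinal}. Write the condition as $g(\hat\theta; c, k_1) = h(\hat\theta; e)$, where $g = (b\hat\theta + c + k_1)/k_1$ is the indifference-protest level and $h$ is the $\Phi(\cdot)$ term; both are increasing in $\hat\theta$. The two-sided limit dominance noted after \eqref{eqmtheta} fixes the endpoint behavior: $g < h$ as $\hat\theta \to -\infty$ (there $g \to -\infty$ while $h \to 0$) and $g > h$ as $\hat\theta \to +\infty$ (there $g \to +\infty$ while $h \to 1$). Hence in the unique-equilibrium case the single crossing is an upward, transversal one, so $\partial(g-h)/\partial\hat\theta > 0$ at $\hat\theta^\ast$ (the tangency case being the non-generic boundary with the three-equilibrium region). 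This positive quantity is exactly the non-degeneracy the implicit function theorem requires and is the denominator in every comparative static that follows.

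For part (i), neither $c$ nor $k_1$ enters $h$ or the posterior $F(\omega \mid e)$, so they move $\E[\rho\mid e]$ only through $\hat\theta$. Since $\partial g/\partial c = 1/k_1 > 0$, the implicit function theorem gives $d\hat\theta/dc = -(\partial g/\partial c)\big/\big(\partial(g-h)/\partial\hat\theta\big) < 0$. For $k_1$, a short computation gives $\partial g/\partial k_1 = -(b\hat\theta + c)/k_1^2$, and the crucial observation is that at equilibrium the common value $g = h$ is a probability in $(0,1)$, which forces $b\hat\theta + c < 0$ and hence $\partial g/\partial k_1 > 0$; therefore $d\hat\theta/dk_1 < 0$. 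Combining with monotonicity in $\hat\theta$ yields $\E[\rho\mid e]$ decreasing in both $c$ and $k_1$. For part (ii) I would take the total derivative $\tfrac{d}{de}\E[\rho\mid e] = \tfrac{\partial \E[\rho\mid e]}{\partial\hat\theta}\,\tfrac{d\hat\theta}{de} + \tfrac{\partial \E[\rho\mid e]}{\partial e}\big|_{\hat\theta}$. Because $\partial h/\partial e < 0$ (a more favorable result lowers the marginal citizen's expected protest), the implicit function theorem gives $d\hat\theta/de < 0$, so the first term is negative by the monotonicity in $\hat\theta$; the direct term is negative by the first-order stochastic dominance argument above. Both channels push the same way, giving $\E[\rho\mid e]$ decreasing in $e$.

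I expect the only genuine obstacles to be the transversality $\partial(g-h)/\partial\hat\theta > 0$ at the equilibrium and the sign of $\partial g/\partial k_1$. The first is handled cleanly by the limit-dominance endpoint argument: a single crossing of two increasing curves with $g$ below $h$ at $-\infty$ and above at $+\infty$ must be upward-crossing. The second hinges entirely on noting that the equilibrium protest level is a genuine probability, which pins down $b\hat\theta + c < 0$. Everything else is routine signing of Gaussian-updating derivatives.
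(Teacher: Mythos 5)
Your proposal is correct and takes essentially the same route as the paper's proof: write $\E[\rho|e]$ as a Gaussian mixture that is monotone increasing in $\hat{\theta}(e)$ and decreasing in the interim mean $\ol{\mu}(e)$, then sign $d\hat{\theta}/dc$, $d\hat{\theta}/dk_1$, and $d\hat{\theta}/de$ by implicit differentiation of the indifference condition, using that the line crosses the normal CDF from below at the (stable) unique equilibrium. If anything, you are more explicit than the paper on two points it merely asserts --- the transversality $\partial(g-h)/\partial\hat{\theta} > 0$ via the limit-dominance endpoint argument, and the observation that $g = h \in (0,1)$ forces $b\hat{\theta} + c < 0$ and hence the needed sign of $\partial g/\partial k_1$.
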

\begin{proof}
	See the appendix.
\end{proof}

Increasing the cost of protest ($c$ or $k_1$)  has two effects. First, from the perspective of an individual citizen, increasing the cost terms for a fixed expected level of protest makes joining the protest less appealing. Second, knowing that other citizens experience the same higher cost means that they too are less likely to protest.  This in turn decreases any given individual's expectation about the overall level of protest, making one's own participation in the protest less favorable for reasons discussed previously.\footnote{This logic is related to the point made by \citet{T2007} that the manner in which electoral fraud leads to protest is by serving as a \emph{focal point} for many citizens who are dissatisfied with the government to have enough confidence that other citizens will simultaneously express their anger \emph{at this point in time} by taking to the streets.
}

The election result similarly has two effects. From the perspective of the incumbent (and analyst), lower election results indicate that the incumbent is less popular, and hence for a fixed citizen strategy more dislike the regime enough to protest. As the citizens observe the election result as well and expect more of their peers will protest, this leads to a higher willingness to protest. Note that both of these effects are entirely informational: the election result doesn't directly affect payoffs but provides information about what others are apt to do, potentially having a large affect on equilibrium behavior.

We now turn to the incumbent the incumbent decision to step down:

\begin{proposition}
	\label{uniquestepdown}
	If there is a unique equilibrium in the protest stage, the incumbent steps down if and only if the election result is sufficiently low, which happens with positive probability. The critical election result $e^*$ is given implicitly by $\E[\rho|e^*] = 1 - y^I$, where $y^I$ denotes the incumbent's payoff for stepping down.
\end{proposition}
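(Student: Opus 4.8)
The plan is to reduce the incumbent's problem to a single threshold comparison and then pin down the threshold using the monotonicity already established in Proposition~\ref{unique}. First I would invoke sequential rationality: since the incumbent carries no private information, her decision does not alter the citizens' inference problem, so conditional on her standing firm the continuation play is exactly the unique protest equilibrium associated with the realized result $e$, yielding a well-defined anticipated protest level $\E[\rho|e]$. Because the stand-firm payoff $1-\rho$ is affine in $\rho$ and the incumbent maximizes expected utility, her expected payoff from standing firm is $1-\E[\rho|e]$. Comparing this with the yielding payoff $y^I$, she steps down if and only if $y^I \ge 1 - \E[\rho|e]$, i.e.\ if and only if $\E[\rho|e] \ge 1 - y^I$. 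The assumption $0 < y^I < 1$ guarantees that the target value $1-y^I$ lies strictly in $(0,1)$.

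Next I would use Proposition~\ref{unique}(ii), which states that $\E[\rho|e]$ is strictly decreasing in $e$. This immediately makes the stepping-down set $\{e : \E[\rho|e]\ge 1-y^I\}$ a lower interval of the form $(-\infty,e^*]$, so that the incumbent steps down exactly when the result is sufficiently low, and it forces any solution of $\E[\rho|e^*]=1-y^I$ to be unique. It then remains only to show that such an $e^*$ exists.

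To establish existence I would apply the intermediate value theorem, which requires continuity of $e \mapsto \E[\rho|e]$ together with the limiting values $\lim_{e\to-\infty}\E[\rho|e]=1$ and $\lim_{e\to+\infty}\E[\rho|e]=0$. The key input is that the equilibrium cutoff $\hat{\theta}(e)$ is uniformly bounded: the two-sided limit dominance noted after equation~\eqref{eqmtheta} confines it to $[-(c+k_1)/b,\,-c/b]$ for every $e$. Given $e$, the posterior on $\omega$ is normal with mean driven to $-\infty$ (resp.\ $+\infty$) as $e\to-\infty$ (resp.\ $e\to+\infty$); since $\hat{\theta}(e)$ stays bounded while $\omega$ runs off to $\mp\infty$, the protesting fraction $\Phi(\sqrt{\tau_\theta}(\hat{\theta}(e)-\omega))$ tends to $1$ (resp.\ $0$) in the posterior, which delivers the two limits after integrating over $\omega$. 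Continuity follows from the continuous dependence of $\hat{\theta}(e)$ on $e$ in the unique-equilibrium regime (via the implicit characterization \eqref{eqmfinal}) together with dominated convergence. The intermediate value theorem then yields a point $e^*$ with $\E[\rho|e^*]=1-y^I$, and strict monotonicity makes it the unique threshold.

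Finally, positive probability is immediate: $e=\omega+x+\nu_e$ is a sum of independent normals plus a constant, hence itself normal with full support on $\R$, so $\Pr(e\le e^*)>0$ for the finite threshold $e^*$. I expect the main obstacle to be the limiting analysis of $\E[\rho|e]$ rather than the comparison itself; once the uniform bound on $\hat{\theta}(e)$ is in hand, however, the limits and continuity follow routinely, and the monotonicity supplied by Proposition~\ref{unique} does the rest.
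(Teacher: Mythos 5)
Your proposal is correct and takes essentially the same route as the paper's own proof: reduce the incumbent's decision to the comparison $\E[\rho|e] \ge 1 - y^I$, use the monotonicity of $\E[\rho|e]$ from Proposition~\ref{unique}, invoke the limits $\E[\rho|e] \to 1$ as $e \to -\infty$ and $\E[\rho|e] \to 0$ as $e \to \infty$ to obtain $e^*$, and note that $e$ is normal with full support so the stepping-down event has positive probability. The only differences are cosmetic: the paper defines $e^*$ by inverting the linear interim mean $\ol{\mu}(e)$ while you work directly in $e$, and you explicitly supply the uniform bound $\hat{\theta}(e) \in [-(c+k_1)/b,\,-c/b]$ justifying the limits, a step the paper attributes to Proposition~\ref{unique} without spelling out.
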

\begin{proof}
	See the appendix
\end{proof}


Thus the incumbent steps down if and only if the election result is sufficiently low that the ensuing protests make stepping down more appealing than standing firm. 
However, the unique equilibrium case does not capture alternation of power in a manner we would label as democratic, as the critical election result that determines whether the incumbent steps down may be completely unrelated to the legal threshold required for victory. In fact, other than for knife-edge cases, the critical election result \emph{won't} correspond to a formal rule. The critical threshold may happen to fall at a 50\% vote share in a given election, but even if so changes in any of the models parameters -- e.g., the cost of protest, or the prior incumbent popularity, which would surely be different in subsequent elections -- will shift this critical threshold away from the legally codified rule. As a result, we call this pattern ``semi-democratic'' alternation of power.

The fact that the incumbent steps down before the protest even begins is somewhat at odds with our motivating examples of semi-democratic turnover, although we do have real world examples of incumbents yielding power without being pushed to do so by post-election protests following elections in non-democratic countries; Poland in 1989 following the stunning and unexpected Solidarity victory is perhaps the clearest example \citep{A1999}.
 Further, in section \ref{sec.beforeafter} we will examine an extension where the incumbent can step down before or after the protest, and in the analogous equilibrium they often will wait until after the initial protest to step down.

For an example an example where the incumbent did \emph{not} step down despite \emph{losing} by the official rules consider the 1991 Algerian parliamentary elections. The Islamist \emph{Islamic Salvation Front} won more than twice as many votes as any other party in the first round of these elections. By any democratic standards, the party would have taken control of the government following the second round of the election.  However, the military stepped in to prevent a change of government and cancelled the second round.  By the standards of our model, the first round of the election would have sent a signal that the incumbent was popular enough to remain in office, despite the fact that this level of popularity was nowhere near conventional democratic levels of majority support among the population.\footnote{We use the term ``majority'' loosely here as this was not a presidential election.  However, if we think of the indirect translation of ``majority" support in a parliamentary election as being able to win enough seats to form a government, then the forces supporting the Algerian military clearly were not going to win enough votes in this election to hold power in the parliament following the election.  
} Thus the model highlights that winning the election by official rules was not the critical result needed for a transfer of power, though this may have happened if the Islamic forces has won even more convincingly.

\section{Multiple Equilibria and Electoral Rules}
\label{sec.multiple}

Recall that in solving the model, we first identify the citizens protest strategy for each election result and then determine whether the incumbent steps down or not as a function of the election result given this strategy. In the previous section, we analyzed the case where there is a unique protest strategy for each election result, which renders this process conceptually (if not algebraically) straightforward. However, there may be more than one strategy that the citizens can use in the protest stage for some election results: intuitively, one equilibrium where citizens are prone to protest when they think others are prone to protest and another equilibrium where citizens are prone to stay home if they think others are apt to stay home. 

To see that such multiplicity is possible, note that the equilibrium protest thresholds $\hat{\theta}(e)$ (from equation 2) are given by the intersection of an increasing line and a normal CDF. Some tedious algebra gives that the precision of this CDF is a function of the precisions of the primitive random variables:
\[		
	\tau_{RHS} =  \frac {\tau_{\theta} (\tau_0 + \tau_{e} + \tau_{\theta})} {2 \tau_{\theta} + \tau_{0} + \tau_{e}} \left(1 + \frac {\tau_{\theta}} {\tau_0 + \tau_e} \right)^{-2}
\]

\begin{figure}
	\caption{Illustration of Proposition \ref{multiple}. The dashed line corresponds to the LHS of equation \ref{eqmfinal}, and the curves to the RHS for various election results.}
	\begin{center}
		\includegraphics{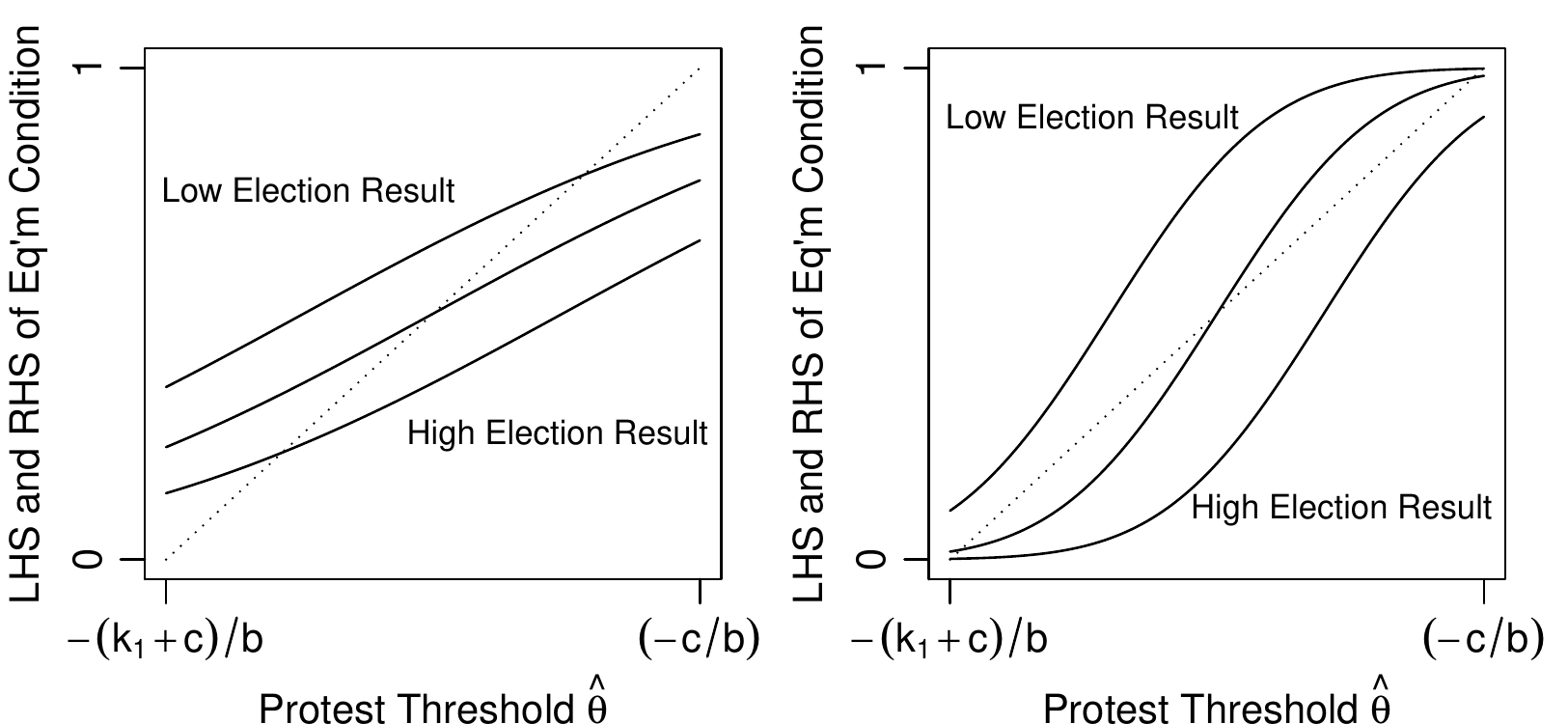}
		\label{intersections}
	\end{center}
\end{figure}

Figure \ref{intersections} plots this equilibrium condition for various election results for a case when $\tau_{RHS}$ is low (left panel) and high (right panel). The dashed line is the LHS of equation \ref{eqmfinal} and the solid curves the RHS for various election results, intersections correspond to equilibrium protest thresholds. In the left panel, the curves are always less steep than the line, meaning there is always a unique intersection and hence a unique equilibrium.  In the right panel, $\tau_{RHS}$ is higher, and for the middle curve -- corresponding to an intermediate election result -- there are three intersections and hence three potential equilibria in the protest stage.\footnote{Since the protest stage is only part of the game, we use the phrase ``multiple equilibria in the protest stage'' somewhat loosely. A more technically correct formulation is ``multiple threshold strategies that can be a part of an equilibrium in the protest stage for certain election results.''}

We analyze when there are multiple equilibria in more detail in the following section, but for now simply note that the patterns described above hold for any parameter choice. Either there is a unique equilibrium for all election results, or there is a unique equilibrium for extreme election results and three equilibria for intermediate election results.\footnote{There is also a knife-edge election result that results in two equilibria, but by the continuity of this random variable these election results happen with probability zero.} These potential equilibria can be ranked in terms of the expected level of protest: when the equilibrium protest threshold is higher, more citizens will dislike the regime enough to protest (recall citizens protest when their regime sentiment is below the threshold, so a higher threshold means more protest). As will be come clear when analyzing the incumbent decision to step down, we are mostly interested in equilibria with as much or as little protest as possible, and hence ignore the ``middle'' equilibrium and focus on the ``high'' and ``low'' protest equilibria.\footnote{Further, equilibria analogous to the middle protest threshold are generally termed unstable in the global games literature.} Formally:

%

\begin{proposition}
	\label{multiple}
	There are at most three potential equilibrium strategies in the protest stage, and there are one or three potential equilibrium thresholds with probability 1. When there are multiple equilibria:\\
	i) The equilibrium with high levels of protests exists if and only if the election result is sufficiently low.\\
	ii) The equilibrium with low levels of protests exists if and only if the election result is sufficiently high\\
	iii) Within the highest or lowest protest equilibrium, the comparative statics in proposition \ref{unique} hold. For example, the level of protest is decreasing in the election result and the cost of protest.
\end{proposition}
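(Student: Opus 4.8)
The entire proposition is an analysis of the single equilibrium equation \ref{eqmfinal}, which I would write as $F(\hat\theta;e)=0$ where $F(\hat\theta;e)=L(\hat\theta)-\Phi(g(\hat\theta;e))$, with $L(\hat\theta)=(b\hat\theta+c+k_1)/k_1$ the increasing line on the LHS and $g(\hat\theta;e)=\alpha\hat\theta-\beta(e)$ the increasing affine argument of $\Phi$ on the RHS, where $\alpha>0$ and $\beta$ is strictly increasing in $e$. First I would record the two structural facts used throughout. Since $L$ is affine, $\partial^2 F/\partial\hat\theta^2=\alpha^2 g\,\phi(g)$, which is negative where $g<0$ and positive where $g>0$; so $F$ is concave then convex in $\hat\theta$, its second derivative changing sign exactly once. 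And at the tails $F\to-\infty$ as $\hat\theta\to-\infty$, $F\to+\infty$ as $\hat\theta\to+\infty$, while $\partial F/\partial\hat\theta=b/k_1-\alpha\phi(g)\to b/k_1>0$ at both ends.

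For the counting claim, the concave-then-convex shape makes $\partial F/\partial\hat\theta$ quasiconvex (decreasing then increasing), hence with at most two zeros, so $F$ has at most three zeros: at most three equilibrium thresholds. The tail signs force $F$ to change sign an odd number of times, so transversal crossings come in counts one or three; the even (two-root) case requires a tangency. A tangency pins $e$ to one of finitely many values, and since $e=\omega+x+\nu_e$ is a continuous (normal) variable, tangencies occur with probability zero, giving one or three thresholds with probability one. I would note multiplicity is possible exactly when the curve's maximal slope $\alpha\phi(0)=\sqrt{\tau_{RHS}/(2\pi)}$ exceeds the line's slope $b/k_1$, tying the count to $\tau_{RHS}$ as in Figure \ref{intersections}.

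For parts (i)–(ii), when there are three roots I label them $\hat\theta_L<\hat\theta_M<\hat\theta_H$ and let $\hat\theta_a<\hat\theta_b$ be the critical points of $F$ (a local max then a local min). Three roots exist iff the local-max value $M(e):=F(\hat\theta_a(e);e)$ is positive and the local-min value $m(e):=F(\hat\theta_b(e);e)$ is negative. The driving monotonicity is $\partial F/\partial e=\phi(g)\beta'(e)>0$, so $F$ shifts up everywhere as $e$ rises; by the envelope theorem $M'(e)=\partial_e F(\hat\theta_a;e)>0$ and $m'(e)=\partial_e F(\hat\theta_b;e)>0$, so $M$ and $m$ are strictly increasing. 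Hence $\{e:M(e)>0\}=(e_L,\infty)$ and $\{e:m(e)<0\}=(-\infty,e_H)$ for thresholds $e_L<e_H$ (the three-equilibrium window being exactly $(e_L,e_H)$). The largest root $\hat\theta_H$, the high-protest equilibrium lying beyond $\hat\theta_b$, exists iff $m(e)<0$, i.e. iff $e<e_H$ — sufficiently low — and the smallest root $\hat\theta_L$, the low-protest equilibrium before $\hat\theta_a$, exists iff $M(e)>0$, i.e. iff $e>e_L$ — sufficiently high. For part (iii), since $\E[\rho\mid e]$ is increasing in the selected threshold, it suffices to sign $d\hat\theta/dp=-(\partial_p F)/(\partial_{\hat\theta}F)$ for each parameter $p$, and I would argue the Proposition \ref{unique} proof carries over verbatim because it uses only that the equilibrium is stable, $\partial_{\hat\theta}F>0$. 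Both extreme roots are stable: $\hat\theta_L$ is the first sign change of $F$ (from $-$ to $+$) and $\hat\theta_H$ the last (also $-$ to $+$), so $\partial_{\hat\theta}F>0$ at each, while only the middle root (crossing $+$ to $-$) is unstable. With $\partial_e F>0$, $\partial_c F=1/k_1>0$, and $\partial_{k_1}F=-(b\hat\theta+c)/k_1^2\ge 0$ on the relevant range $\hat\theta\in[(-k_1-c)/b,-c/b]$, the same signs as in the unique case obtain, so the threshold and hence $\E[\rho\mid e]$ fall in $e$, $c$, and $k_1$.

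The main obstacle is the bifurcation bookkeeping in (i)–(ii): making the envelope argument cleanly show that $M$ and $m$ each cross zero once and in the right direction, and confirming that the largest (resp. smallest) root is precisely the high- (resp. low-) protest equilibrium that appears and disappears at the upper (resp. lower) tangency. The counting in (1) and the sign-chase in (iii) are then routine given the stability characterization.
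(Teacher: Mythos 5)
Your proposal is correct and takes essentially the same route as the paper: the paper likewise counts intersections of the line $\alpha + \beta\hat{\theta}$ with the normal CDF, argues there are at most three because the difference of the two sides can change sign at most twice (your quasiconvex-derivative observation), treats the two-root tangency case as a probability-zero event in $e$, places the low- and high-protest equilibria on the convex and concave branches of the CDF, and obtains part iii by implicitly differentiating the equilibrium condition using exactly your stability fact that the LHS crosses the RHS from below at the extreme intersections. Your envelope-theorem bookkeeping with the monotone critical values $M(e)$ and $m(e)$ is simply the formal version of what the paper asserts graphically in Lemma \ref{lem_intersections} (``a formal proof is available upon request''), and your sharp multiplicity threshold $b/k_1 < 1/(\sigma_{RHS}\sqrt{2\pi})$ is in fact tighter than the loose uniqueness condition $\beta > 1/\sigma_{RHS}$ stated in part i of that lemma.
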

\begin{proof}
See the appendix.
\end{proof}

Parts i and ii can be seen in the right panel of figure \ref{intersections}: for the low election result, there is only an equilibrium with a high protest threshold and hence a high level of protest, and for the high election result there is only an equilibrium threshold with a low level of protest. 

\subsection*{Stepping Down with Multiple Equilibria}

%

\begin{figure}
	\caption{Illustration of decision to step down in unique equilibrium case (left panels) and multiple equilibrium case (right panels). The bottom panels have a different cost of protest.}
	\begin{center}
		\includegraphics{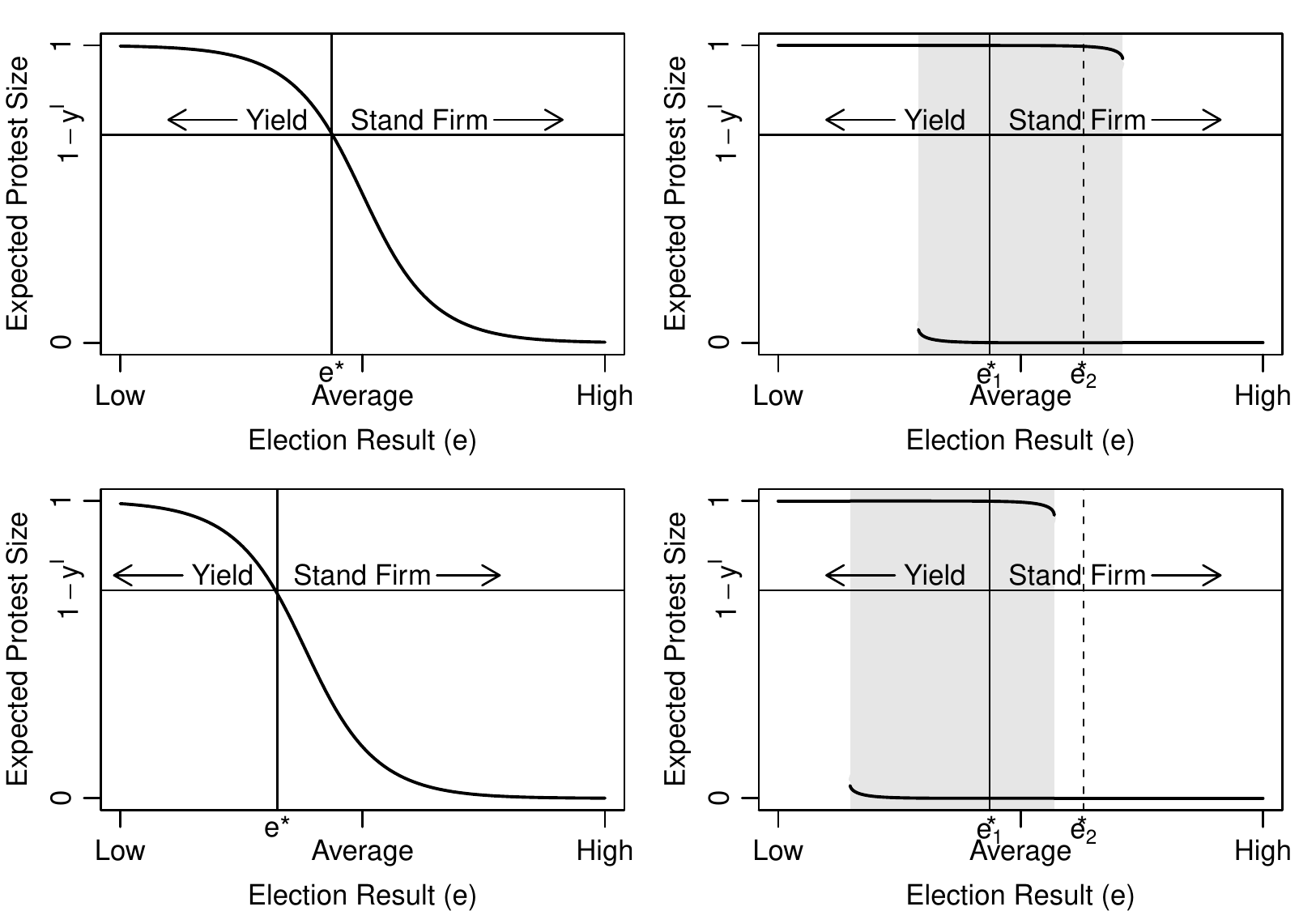}
		\label{uniqueandmult}
	\end{center}
\end{figure}

Now that we have a complete description of the potential equilibria in the protest stage, we can consider all possibilities in the incumbent decision to step down. Figure \ref{uniqueandmult} illustrates the distinction between the incumbent stepping down in the unique equilibrium and multiple equilibrium cases. In all panels, the curve traces the expected level of protest as a function of the election result. The incumbent steps down if and only if the curve is above $1 - y^I$, represented by the horizontal line. The left panels show the unique equilibrium case, and the right panels a similar multiple equilibrium case. For both, there is a higher cost of conflict in the bottom panel than the top panel.  

Starting with the unique equilibrium case, for both costs of conflict there is a unique $e^*$ such that the incumbent steps down if and only if $e < e^*$. However, this critical election threshold is lower when the cost of conflict is higher (bottom panel), indicating the incumbent stays in office while getting a lower vote share. As discussed above, we would not consider the unique equilibrium case democratic as the election result that determines who ``wins'' the election varies based on the exogenous parameters.

For the multiple equilibrium case, the shaded area indicates the election results for which there is a high protest equilibrium (upper line) and a low protest equilibrium (lower line). Since the high protest equilibrium is above $1 - y^I$ and the low protest equilibrium is below $1 - y^I$, the incumbent steps down if and only if the citizens play the high protest equilibrium for the observed election result.\footnote{If $y^I$ is close to 0 or 1 then it is possible that the incumbent does not step down in the face of a high protest equilibrium or does step down with a low protest equilibrium. This reduces the range of election results where the incumbent may or may not step down, but does not change the patterns described here.} So, for both costs there exists an equilibrium where the incumbent steps down if and only if $ e < e_1^*$. For the lower cost (top right panel), there is also an equilibrium where the citizens play the high protest equilibrium and hence the incumbent steps down if and only if $e < e_2^*$, where $e_2^* > e_1^*$. In fact, there is an equilibrium of this form for any $e^*$ in the shaded area. 

This is a key point: any electoral rule that falls in this range can be enforced by equilibrium behavior.  So, if there is a ``democratic" rule in this range (e.g., $e<.50$), then this rule can be enforced by equilibrium behavior according to our model, even thought our model posit that \emph{elections do no more than provide information about incumbent popularity.}  Moreover, this rule can remain enforceable even as the cost of conflict and other exogenous parameters change. The result is equilibrium behavior that \emph{mimics rule-based behavior} without turning elections into a contract in our model, and therefore a single model that explains alternations of power in both ``non-democratic" and ``democratic" 
systems.

However, note that $e^*_2$ is outside of the shaded range for the higher cost of conflict. So, if $e^*_2$ corresponds to the electoral law, then in the stylized example in Figure \ref{uniqueandmult}, the increase in the cost of protest going from the top to bottom panel would render this rule no longer enforceable.  This allows the model to capture the fact that if exogenous conditions shift too much -- e..g, the cost of protest increases too dramatically, something we might associate with a country becoming less democratic -- then the apparently ``rule-based" democratic behavior in equilibrium will eventually collapse.



\subsection*{Informative Elections, Polarization, and Consolidated Democracy}

Having established the centrality of multiplicity for constructing equilibria where democratic rules are followed, we now examine in more detail \emph{when} there are multiple equilibria in the protest stage. Given the size of the parameter space and complexity of the equilibrium condition, there are many potential results, all of which require caveats. We focus on two results that we believe are particularly easy to interpret. More informative elections (high $\tau_e$) and a less polarized citizenry (low $b$) tend to lead to multiple equilibria for some election results.

A common thread in both results is that there tend to be multiple equilibria when citizens condition their behavior heavily on whether they expect other citizens to protest. When the election is uninformative, it is harder to form conjectures about what the election result means and hence whether other citizens dislike the regime enough to protest, so citizens will tend to rely more heavily on their personal regime sentiment. Similarly, when polarization is high, many citizens care more about the expressive benefit relative to considerations of whether others are going to take to the streets. So, uninformative elections and high polarization tend to be associated with a unique equilibrium. 

On the other hand, when the election is informative and citizens are not polarized, they condition their behavior more on their expectations of what others will do. This can result in multiple equilibria: one with a self-fulfilling expectation that many others will go out and protest and one with a self-fulfilling expectation that few citizens will protest:

\begin{proposition}
	\label{informative}
		i) There are multiple equilibria in the protest stage for some election results if and only if the polarization of citizens is sufficiently low. \\
	ii) Provided there is always a unique equilibrium for a completely uninformative election, there are multiple equilibria in the protest stage for some election results if and only if the election is sufficiently informative.
\end{proposition}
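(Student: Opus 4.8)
The plan is to reduce the whole question to counting intersections of a line with an $S$-curve. For each election result $e$, the equilibrium thresholds $\hat\theta(e)$ are exactly the solutions of \eqref{eqmfinal}, i.e. the intersections of the straight line on its left-hand side, which has slope $b/k_1$, with the normal CDF on its right-hand side, which as a function of $\hat\theta$ is increasing, has precision $\tau_{RHS}$, and stays in $[0,1]$. Since the gap $g(\hat\theta):=\text{RHS}-\text{LHS}$ satisfies $g\to+\infty$ as $\hat\theta\to-\infty$ and $g\to-\infty$ as $\hat\theta\to+\infty$, there is always an odd number of crossings, so the only options are one or three, as claimed in Proposition \ref{multiple}. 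I would state the governing criterion immediately: three crossings occur for some $e$ if and only if the maximum slope of the CDF exceeds the slope of the line. A normal CDF with precision $\tau_{RHS}$ has maximum slope $\sqrt{\tau_{RHS}/(2\pi)}$, attained at its inflection point, so the criterion is
\[
 \sqrt{\tfrac{\tau_{RHS}}{2\pi}} > \frac{b}{k_1}, \qquad\text{equivalently}\qquad \tau_{RHS} > \frac{2\pi b^2}{k_1^2}.
\]

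Necessity is immediate: writing $g'(\hat\theta)=A\phi(A\hat\theta+B)-b/k_1$, where $A=\sqrt{\tau_{RHS}}$ is the linear coefficient inside $\Phi$ and $B$ collects the terms in $e$, the peak of $A\phi$ is $\sqrt{\tau_{RHS}/(2\pi)}$, so if this does not exceed $b/k_1$ then $g'\le0$ everywhere, $g$ is monotone, and there is a unique crossing for every $e$. Sufficiency is the step I expect to be the main obstacle. When the peak slope strictly exceeds $b/k_1$, $g'$ is positive on a bounded interval and negative outside, so $g$ has exactly one local minimum followed by one local maximum. The key observation is that $e$ enters the RHS only through the additive term $-\tau_e(e-x)$ inside $\Phi$, so varying $e$ merely translates the CDF horizontally; after the substitution $u=A\hat\theta+B$ the gap reads $\Phi(u)-\tfrac{b}{k_1 A}u+\mathrm{const}(e)$, whose shape — and in particular the amplitude (local max value minus local min value) — is independent of $e$, while $\mathrm{const}(e)$ sweeps across all of $\R$ as $e$ ranges over $\R$ (using $\tau_e>0$). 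Hence the entire profile of $g$ can be shifted vertically to any height, and choosing $e$ so that the level $0$ lies strictly between the fixed-amplitude local minimum and local maximum yields exactly three crossings. This proves the criterion.

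Part (i) then follows at once: the displayed formula for $\tau_{RHS}$ involves only $\tau_0,\tau_e,\tau_\theta$ and not $b$, so the criterion $b<k_1\sqrt{\tau_{RHS}/(2\pi)}$ says precisely that multiplicity for some $e$ occurs if and only if polarization $b$ lies below a fixed positive threshold — i.e. is sufficiently low.

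For part (ii) I would analyze how $\tau_{RHS}$ depends on $\tau_e$. Setting $s:=\tau_0+\tau_e$, the displayed expression simplifies to $\tau_{RHS}=\tau_\theta s^2/\big[(s+\tau_\theta)(s+2\tau_\theta)\big]$, and a short logarithmic-derivative computation gives $\tfrac{d}{ds}\log\tau_{RHS}=\tfrac{2}{s}-\tfrac{1}{s+\tau_\theta}-\tfrac{1}{s+2\tau_\theta}=\tfrac{3s\tau_\theta+4\tau_\theta^2}{s(s+\tau_\theta)(s+2\tau_\theta)}>0$, so $\tau_{RHS}$ is strictly increasing in $\tau_e$, rising from its value at $\tau_e=0$ up to the limit $\tau_\theta$. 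The proviso that a completely uninformative election ($\tau_e=0$) has a unique equilibrium means $\tau_{RHS}|_{\tau_e=0}<2\pi b^2/k_1^2$; by strict monotonicity and continuity there is then a unique cutoff $\tau_e^*$ with multiplicity if and only if $\tau_e>\tau_e^*$, i.e. if and only if the election is sufficiently informative, which is exactly (ii).
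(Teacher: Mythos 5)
Your overall route is the same as the paper's: rewrite the indifference condition (equation \ref{eqmfinal}) as a line of slope $b/k_1$ crossing a normal CDF whose precision $\tau_{RHS}$ does not depend on $b$, count crossings (odd, hence one or three), and characterize multiplicity by comparing the line's slope to the CDF's maximal slope $\sqrt{\tau_{RHS}/(2\pi)}$. Within that shared frame, your write-up is actually tighter than the paper's in two places: the paper's Lemma \ref{lem_intersections} leaves the band between $1/(\sigma_{RHS}\sqrt{2\pi})$ and $1/\sigma_{RHS}$ unresolved and defers the proof to a figure, whereas your substitution $u = A\hat\theta + B$ — showing the profile $\Phi(u) - (b/(k_1 A))u$ is independent of $e$ while the additive constant sweeps all of $\R$ when $\tau_e > 0$ — gives a clean, sharp criterion and a genuine sufficiency proof. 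Part (i) of your argument is correct (modulo the implicit assumption $\tau_e > 0$ needed for the sweep, which the paper also makes silently).

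There is, however, a genuine gap in the last step of part (ii), and your own computation exposes it. You show $\tau_{RHS} = \tau_\theta s^2/\big[(s+\tau_\theta)(s+2\tau_\theta)\big]$ is strictly increasing in $\tau_e$ but rises only to the \emph{finite} limit $\tau_\theta$. Monotonicity and continuity then deliver a finite cutoff $\tau_e^*$ only when the critical level $2\pi b^2/k_1^2$ lies strictly below that limit, i.e.\ when $b < k_1\sqrt{\tau_\theta/(2\pi)}$. If instead $b \geq k_1\sqrt{\tau_\theta/(2\pi)}$, the proviso (uniqueness at $\tau_e = 0$) is satisfied, yet $\tau_{RHS} < \tau_\theta \leq 2\pi b^2/k_1^2$ for \emph{every} $\tau_e$, so no election, however informative, produces multiplicity, and your asserted biconditional ``multiplicity iff $\tau_e > \tau_e^*$'' has no finite $\tau_e^*$ to anchor it. The substantive reason is that even a perfectly informative election leaves the marginal citizen's expected-protest curve with maximal slope $\sqrt{\tau_\theta/(2\pi)}$, reflecting the irreducible idiosyncratic noise in private signals. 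It is worth flagging that the paper's own proof has the same defect in a worse form: it asserts $\sigma_{RHS} \to 0$ as $\tau_e \to \infty$, which your limit calculation shows to be false (the relevant scale tends to $\sqrt{\tau_\theta}$, not to an infinitely steep CDF). So your proof inherits the published argument's conclusion while computing more carefully than it does; to be correct, part (ii) needs the additional hypothesis $b < k_1\sqrt{\tau_\theta/(2\pi)}$, or an explicit remark that the proposition requires this caveat.
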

\begin{proof}
See the appendix.\footnote{The appendix also contains some technical caveats related to this result; for example, it is not necessarily the case that the probability of an election result with multiple equilibria is increasing in the precision of the election result.
}
\end{proof}
These results can be seen visually by referring back to figure \ref{intersections}. Increasing the precision of the election result increases $\tau_{RHS}$, which as shown in the contrast between the right and left panels makes it more possible to have multiple intersections and hence multiple equilibria. Similarly, lowering the polarization ($b$) makes the dashed line flatter, also raising the possibility of multiple intersections.

One way to interpret part ii is that following a free and fair election, voters can be relatively confident that the announced election result is a good proxy for the actual popularity of the incumbent. Further, given the public nature of the election result, citizens have a good conjecture about what \emph{other} citizens think about the incumbent popularity. The more norms of free and fair elections are violated, the less likely it is that all citizens will be able to come to the same conclusions regarding the underlying popularity of incumbents. It is this common belief about what the election result means that makes it easier for citizens to coordinate on either a high level or a low level of protest, enabling them to enforce electoral rules. 

Before turning to the extensions, we briefly address the question of why an equilibrium resembling compliance with a majoritarian electoral rule is particularly appealing equilibrium.
One argument is that following an electoral rule, particularly one at a natural threshold like a majority vote share is a natural focal point in sense coined by \citet[ch. 3]{S1960}. Citizens could threaten to protest against leaders who don't achieve the electoral threshold plus five percent, or on any publicly observed ``sunspot'', but these simply seem less natural than coordinating against law breakers. Further, given the fact that protest is costly and citizens want to protest against unpopular rules but not popular ones, such a rule may be optimal by putting popular leaders in office peacefully.\footnote{A full technical analysis of the optimal equilibria quickly becomes complex. Further issues like making fraud endogenous could complicate this argument, but we leave a thorough analysis of these questions to future work.}

\section{Unknown Fraud and Monitoring Reports}
\label{sec.monitor}
Until this point we have assumed that the amount of fraud is known.  However, this is rarely -- if ever -- the case. Indeed, in many of our motivating examples, attempts to resolve uncertainty about how much fraud was committed played a key role in the post-election protests. Further, we have some idea about how citizens come to hold ideas about how much fraud was committed, namely due to the publication of what we are terming \emph{monitoring reports}, by which we mean any public signals about how much fraud was committed. This could include actual reports issued by election monitors such as the Carter Center or Organization for Security and Cooperation in Europe, but also is intended to be broad enough to refer to news media reports of fraud, information (such as videos) posted to the internet, or reports of parallel vote tabulations. 

To take account of these considerations, we make two modifications to the model. First, we now interpret the ``noise'' in the election result to reflect uncertainty about how much fraud was committed. That is, $x$ now represents the average amount of fraud, and $x + \nu_e$ the ``true'' amount of fraud.\footnote{All the results here would hold as long as some of the uncertainty built into the election result comes from uncertainty about fraud.}
Second, in addition to the election signal, citizens now also observe a monitoring report $m$, which is a noisy signal of the level of fraud. The monitoring report is normally distributed with mean $x + \nu_e$ and precision $\tau_m$. That is, $m = x + \nu_e + \nu_m$ where $\nu_m$ is independent of the other noise terms and normally distributed with mean 0 and precision $\tau_m$. I

As in the main model, the joint normality assumptions lead to a convenient characterization of the citizens' \emph{interim} beliefs about the incumbent popularity, which summarize the public information citizens have about the incumbent popularity before observing their private signal. In the baseline model this is normal with a mean that is a weighted average of the prior mean ($\mu_0$) and election result less the expected level of fraud ($e - x$), and a precision that is the sum of the precision of the prior and election result. With the monitoring report, this belief is again a weighted average, with an additional correction term that reflects whether the monitoring report indicates more or less fraud than expected ($m - x$) and a higher precision ($\tau_0 + \tau_e + \tau_m$). Thus the analysis of the model is essentially the same with some additional terms added to the beliefs. The main implications of adding the monitoring report are:


\begin{proposition}
	\label{mcompstatics}
	If there is a unique equilibrium in the protest stage, the level of protest is increasing in the monitoring report (i.e., as the report claims more fraud), and the incumbent steps down if and only if the monitoring report claims sufficiently high levels of fraud.\footnote{Analogous results can be derived in the multiple equilibrium case.}
\end{proposition}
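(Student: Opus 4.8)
The plan is to reduce the monitoring-report model to the baseline model already analyzed in Propositions \ref{unique} and \ref{uniquestepdown}, exploiting the fact that the report $m$ enters the analysis only through the citizens' interim (public) belief about $\omega$. First I would compute that belief. Writing the true fraud as $f = x + \nu_e$, so that $e = \omega + f$ and $m = f + \nu_m$, the two public statistics $e - x = \omega + \nu_e$ and $e - m = \omega - \nu_m$ are, conditional on $\omega$, independent Gaussian signals of $\omega$ with precisions $\tau_e$ and $\tau_m$; since $(e-x,e-m)$ is an invertible transformation of $(e,m)$, conditioning on it is the same as conditioning on the observables. Conjugate updating then yields an interim belief on $\omega$ that is normal with precision $\tau_0 + \tau_e + \tau_m$ and mean
\[
M(e,m) = \frac{\tau_0 \mu_0 + (\tau_e + \tau_m)(e - x) - \tau_m (m - x)}{\tau_0 + \tau_e + \tau_m},
\]
confirming the text's description (a weighted average of $\mu_0$ and $e-x$ plus a correction $-\tau_m(m-x)$).

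The key structural observation is that the realized value $m$ shifts only the interim mean while leaving the interim precision $\tau_0 + \tau_e + \tau_m$ unchanged, exactly as the realized value $e$ shifts only the mean and leaves $\tau_0 + \tau_e$ unchanged in the baseline. Substituting this belief in place of the baseline one therefore reproduces the derivation of equation \ref{eqmfinal} verbatim, with $m$ appearing inside the normal CDF only through the additive term $+\tau_m(m-x)$ in the numerator (the normalizing denominator acquires a new precision factor but does not depend on the realized $m$). Since $\partial M/\partial m = -\tau_m/(\tau_0+\tau_e+\tau_m) < 0$, increasing $m$ moves the interim mean in the same direction that \emph{decreasing} $e$ does, so the comparative statics in $m$ are the mirror image of those in $e$ from Propositions \ref{unique}(ii) and \ref{uniquestepdown}.

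To make this explicit I would differentiate the equilibrium condition with the new belief. In the unique-equilibrium case the increasing line on the LHS of \ref{eqmfinal} is steeper than the normal-CDF curve on the RHS at the intersection, so $\partial(\mathrm{RHS}-\mathrm{LHS})/\partial\hat{\theta} < 0$ there; because the RHS strictly increases in $m$, the implicit function theorem gives $d\hat{\theta}/dm > 0$. The incumbent's expected protest is $\E[\rho\mid e,m] = \Phi\!\big(\sqrt{\tau_\theta}\,(\hat{\theta} - M)/\sqrt{1+\tau_\theta/(\tau_0+\tau_e+\tau_m)}\big)$, which is increasing in $\hat{\theta}$ and decreasing in $M$; as $\hat{\theta}$ rises and $M$ falls in $m$, $\E[\rho\mid e,m]$ is strictly increasing in $m$, establishing the first claim. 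For the step-down claim, the incumbent stands firm iff $\E[\rho\mid e,m] \le 1 - y^I$, as in Proposition \ref{uniquestepdown}. Since $\hat{\theta}$ stays trapped in the dominance interval $\left(\tfrac{-c-k_1}{b}, \tfrac{-c}{b}\right)$ while $M \to \mp\infty$ as $m \to \pm\infty$, the expected protest rises continuously and strictly from $0$ to $1$; with $0 < 1 - y^I < 1$ there is a unique threshold $m^*$ solving $\E[\rho\mid e,m^*] = 1 - y^I$, the incumbent steps down iff $m > m^*$, and this occurs with positive probability because $m$ has full support on $\R$.

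The main obstacle I anticipate is not any single calculation but keeping the directions consistent: one must carefully distinguish the two expected-protest objects (the marginal citizen's conditional expectation appearing inside the equilibrium condition versus the incumbent's expectation $\E[\rho\mid e,m]$ used for the step-down decision) and verify that the sign of $d\hat{\theta}/dm$ genuinely follows from the uniqueness condition (line steeper than curve) rather than being assumed. Everything else is routine Gaussian bookkeeping inherited from the baseline analysis.
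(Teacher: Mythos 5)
Your proposal is correct and follows essentially the same route as the paper: the paper's proof also reduces the extension to the baseline via Lemma \ref{fundamental}, observing that $m$ enters only through the interim mean belief $\ol{\mu}(e,m)$ (precision unchanged), which is linearly decreasing in $m$, so the comparative statics and step-down threshold mirror those in $e$ from Propositions \ref{unique} and \ref{uniquestepdown}. Your version simply makes explicit what the paper delegates to the reduction (the implicit-differentiation sign check at the stable intersection and the limit argument pinning down $m^*$), and your sign $d\hat{\theta}/dm > 0$ is the correct one, consistent with the posterior mean $\ol{\ol{\mu}}(e,m,\theta_j)$ derived at the start of the appendix.
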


Just as stronger election results for the incumbent lead to less protest because they indicate the incumbent is popular, reports of more fraud lead to more protest because they indicate the incumbent is, for a fixed election result, less popular.\footnote{That is, this is not driven by citizens being angry at cheating incumbents (as in \citet{T2007}).} This prediction is consistent with recent empirical work \citep{HM2012, R2012}. This extension is also consistent with a more nuanced result in \citet{R2012}, which finds that the effect of public reports of fraud on protest is larger when the margin of victory is small. In our model, the expected level of protest is an backwards (i.e., decreasing) s-shaped curve in the posterior belief about the incumbent popularity, which is a function of both the election result and monitoring report. When the election result is very high, and expected protest low, the posterior belief on the flat part of the s-curve and increasing beliefs about fraud has little effect on protest. However, for closer elections the posterior belief may be closer to the sharp part of the s-curve, meaning changes in monitoring reports can have a large impact on the resulting amount of protest.

Further, a similar result about the relationship between the precision of the election result and the possibility of multiple equilibria holds for the precision of the monitoring report:

\begin{proposition}
	\label{infomonitor}
	Provided there is a unique equilibrium without a monitoring report, there are multiple equilibria for some election results with a monitoring report provided the report is sufficiently informative.
\end{proposition}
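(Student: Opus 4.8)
The plan is to exploit the fact that a monitoring report is informationally equivalent to making the election more precise, so that the multiplicity analysis behind Proposition~\ref{informative}(ii) applies with $\tau_e$ replaced by $\tau_e+\tau_m$. First I would recompute the public information about $\omega$. Since $e-x=\omega+\nu_e$ and $e-m=\omega-\nu_m$ are two conditionally independent Gaussian signals of $\omega$ with precisions $\tau_e$ and $\tau_m$, combining them with the $N(\mu_0,1/\tau_0)$ prior gives an interim belief about $\omega$ that is normal with precision $\tau_0+\tau_e+\tau_m$ and mean
\[
\frac{\tau_0\mu_0+(\tau_e+\tau_m)(e-x)-\tau_m(m-x)}{\tau_0+\tau_e+\tau_m},
\]
which is the baseline weighted average together with the $(m-x)$ correction described in the text. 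The crucial point is that the \emph{precision} of this belief---and hence the shape, as opposed to the horizontal position, of the right-hand-side curve---depends on $(\tau_e,\tau_m)$ only through the sum $\tau_e+\tau_m$. Because the private-signal structure is unchanged, the marginal-citizen indifference condition is therefore equation~\ref{eqmfinal} with the public precision $\tau_0+\tau_e$ replaced by $\tau_0+\tau_e+\tau_m$ throughout, plus an additional mean-shift term in $(m-x)$ that enters the CDF argument additively and independently of $\hat{\theta}(e)$ and so only slides the curve horizontally without changing its shape.

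Next I would import the multiplicity analysis from the proof of Proposition~\ref{informative}. Writing the displayed expression for $\tau_{RHS}$ as
\[
\tau_{RHS}=\frac{\tau_\theta\,(\tau_0+\tau_e)^2}{\bigl((\tau_0+\tau_e)+\tau_\theta\bigr)\bigl((\tau_0+\tau_e)+2\tau_\theta\bigr)},
\]
the substitution $\tau_e\mapsto\tau_e+\tau_m$ yields $\tau_{RHS}^m=\tau_\theta P^2/\bigl((P+\tau_\theta)(P+2\tau_\theta)\bigr)$ with $P\equiv\tau_0+\tau_e+\tau_m$, and a one-line computation (writing it as $\tau_\theta/[(1+\tau_\theta/P)(1+2\tau_\theta/P)]$) shows it is strictly increasing in $P$, hence strictly increasing in $\tau_m$. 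As established for Proposition~\ref{informative}, there are multiple equilibria for some election result if and only if the maximal slope of the right-hand-side curve exceeds the slope $b/k_1$ of the indifference line, i.e.\ if and only if $\tau_{RHS}^m$ exceeds a threshold depending only on $b$ and $k_1$. Monotonicity and continuity of $\tau_{RHS}^m$ in $\tau_m$, together with the hypothesis that the no-report model ($\tau_m=0$) lies below this threshold, then deliver a cutoff $\tau_m^\ast$ such that multiple equilibria arise for some $e$ precisely when $\tau_m>\tau_m^\ast$.

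I expect the main obstacle to be a limit/boundary subtlety rather than the updating step. Since $\tau_{RHS}^m\to\tau_\theta$ as $\tau_m\to\infty$---the supremum is finite, not infinite---the cutoff $\tau_m^\ast$ exists, and the statement is non-vacuous, only when the multiplicity threshold lies strictly below $\tau_\theta$; this is exactly the low-polarization condition of Proposition~\ref{informative}(i), so I would either carry it as a maintained assumption or read ``sufficiently informative'' as presupposing that such a $\tau_m^\ast$ exists. The remaining point needing care is the claim that the monitoring report only translates the right-hand-side curve horizontally: this follows because $m$ enters the CDF argument solely through the mean, additively and independently of $\hat{\theta}(e)$, so the inflection-point slope---and therefore $\tau_{RHS}^m$ and the tangency test for three intersections---is identical across all realized pairs $(e,m)$.
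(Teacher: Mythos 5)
Your proof is correct and takes essentially the same route as the paper: fold the monitoring report into the interim public belief (precision $\tau_0+\tau_e+\tau_m$, with the $(m-x)$ term only translating the curve horizontally) and rerun the slope test behind Proposition~\ref{informative}(ii) with $\tau_e+\tau_m$ in place of $\tau_e$, which is exactly what Lemma~\ref{fundamental} is invoked for. If anything you are more careful than the paper at the boundary: the appendix's proof of Proposition~\ref{informative}(ii) asserts $\sigma_{RHS}\to 0$ as the public precision grows, whereas in fact $\tau_{RHS}\to\tau_\theta$ as you note, so the cutoff $\tau_m^{\ast}$ exists only under the low-polarization condition $b/k_1<\sqrt{\tau_\theta/(2\pi)}$ that you correctly flag as a maintained assumption.
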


The reason for this is straightforward given the previous result: one role that the monitoring report plays is to make the election result more informative as there is less uncertainty about the level of fraud. Thus institutions that detect electoral fraud can help consolidate democracy for a potentially unexpected reason: increasing the amount of public information and facilitating the coordination dynamics required to make electoral rules enforceable.\footnote{ \citet{SC2012} make a related argument that information generated by third parties like international monitors can help peaceful compliance with election results by alleviating information asymmetries between elites. Our results show that information can also 
facilitate peaceful transitions of power by alleviating information problems among the citizenry.}

\section{When to Step Down}
\label{sec.beforeafter}
The main model presented here assumes the incumbent has only one chance to step down, and that this chance is before the protest. As with any assumption, this is clearly unrealistic: leaders can concede defeat on the night of the election (as typically happens in democratic elections) or after days or weeks of protest (as in the Colored Revolutions).

To capture this difference, we extend the model to allow the incumbent to either step down before or after the protest. As in the main model, if the incumbent steps down right away, they get a payoff $y^I$; if they never step down the payoff is $1 - \rho$. If they step down after the protest, they get payoff $y_I - \gamma \rho$, where $\gamma \in (0, 1)$. That is, (1) when stepping down after the protest the incumbent is hurt by the level of protest, but not as much as they are if they do not step down at all, and (2) it is always better to step down before the protest than after the protest. Despite the second part, the incumbent will sometimes step down after the protest because they are unsure of how many citizens will take to the streets, and hence it can be valuable to wait things out and only step down if the protest is indeed large.

The intuition behind how this modification affects the results is straightforward but the technical details involved are complex, so we relegate most of the formalization to the appendix. The first point is that citizen behavior at the protest stage is unaffected, so we only need to determine the incumbent yielding decisions. As above, the relative payoff from stepping down immediately diminishes the higher the vote for the incumbent candidate, so there will be a critical election result such that the incumbent steps down immediately if $e < e^*$.
If the incumbent does not step down, there will be a critical protest level $\rho^* \in (0,1)$ such that the the incumbent steps down following the protest if and only if $\rho > \rho^*$.

This pattern result holds regardless of whether or not there are multiple equilibria in the protest stage. However, there are crucial differences between the cases. In the unique equilibrium case, the value of standing firm is continuously decreasing in the election result, so there will be election results where a moderate level of protest is expected. Further, these are precisely the election results where the incumbent is uncertain about exactly how big the protests will be, giving an incentive to wait things out and see how large the protests become. This is well reflected in elite behavior in semi-democratic (or competitive authoritarian) regimes, where we have many examples of leaders attempting to hold onto office following a poor electoral showing that they know is likely to produce a high degree of protest, sometimes successfully (Iran, Russia) and sometimes unsuccessfully (Ukraine, Serbia). 

On the other hand, in the multiple equilibrium case there is a sharp discontinuity in the payoff to standing firm at the critical election result where the citizens switch from the high protest to low protest equilibrium. Further, as shown in figure \ref{uniqueandmult}, the expected level of protest tends to be very close to zero or one: meaning nearly none or all of the citizens will protest based on whether the electoral threshold is reached. So, there is little gain to waiting and seeing the actual protest size, which will means the incumbent will step down immediately after ``losing'' the election based on a codified rule, and faces minimal protest (and hence almost never steps down) upon winning.

So, there is an important difference between the multiple equilibrium case and unique equilibrium case in terms of \emph{when} the incumbent tends to step down. In the multiple equilibrium case -- which we have already argued resembles democratic turnover in office -- the incumbent tends to step down right after the election (upon losing), and large protests rarely happen. In the unique equilibrium case -- which again we label semi-democratic -- it is possible that the incumbent steps down right away, but will also very often not step down right away in order to see the realized level of protest.

\section{Conclusion}
\label{sec.conclusion}

The spread of elections globally is perhaps the most interesting development in comparative politics over the past half-century.  Nearly every country on earth now holds elections, and in the vast majority of cases there is at least the appearance of some competition; that is, most of these are not Soviet style 99\% turnout and 99\% incumbent vote share elections.  And yet, as social scientists (and theorists) we tend to still want to dichotomize the electoral experience into ``free" elections and (insert adjective here)\footnote{E.g., not free and fair, quasi-democratic, semi-democratic, competitive authoritarian, unequal, etc.} elections, with the former being imbued with an almost mythical contractual power that guarantees results will be respected and the latter treated as a separate animal if not dismissed as meaningless.  Somewhat surprisingly, as a field we lack a general theory of elections.  This paper is a step in that direction.

More specifically, the model presented above provides a general framework to capture how elections across regime types can facilitate alternation in power. In particular, it shows that a minimalist assumption about the role of elections -- that they are a public signal of the incumbent popularity -- can lead to equilibria that are qualitatively consistent with transfer of power due to (the threat of) post-election protest as well as peaceful and democratic transfer of power. This highlights two factors that tend to facilitate the possibility of democratic transfer of power: citizens that are not too polarized and elections that are informative.

More broadly, the results speak to the value of modeling elections primarily as information generating institutions. Here we have shown that this allows us to capture a phenomenon that at first may seem unrelated to information generation: peaceful and rule-based alternation of power. Further, the universality of this approach may build a bridge between the study of democratic and less-than-democratic elections.

\section*{Appendix A: Derivation of Equilibrium Condition, Proofs, and Technical Comments}

\subsection*{Conditional distributions of $\omega$} Here we derive the posterior belief about $\omega$ given $e$, $m$, and $\theta_j$, the other posteriors follow from a similar calculation. The joint distribution of $\omega$, $e$, $m$, and $\theta_j$ is a multivariate normal with mean vector $(\mu_0, \mu_0 + x, x, \mu_0)$ and covariance matrix:
\begin{align*}
	\Sigma = \bordermatrix{ & \omega & e & m & \theta_j \cr
		\omega & \tau_0^{-1} & \tau_0^{-1} & 0 & \tau_0^{-1} \cr
		e & \tau_0^{-1} & \tau_0^{-1} + \tau_e^{-1} & \tau_e^{-1} & \tau_0^{-1} \cr
		m & 0 & \tau_e^{-1} & \tau_e^{-1} + \tau_{m}^{-1} & 0 \cr
		\theta_j & \tau_0^{-1} & \tau_0^{-1} & 0 & \tau_0^{-1} + \tau_{\theta}^2}
	= \begin{pmatrix}
			\Sigma_{11} & \Sigma_{12}\\
			\Sigma_{21} & \Sigma_{22}
			\end{pmatrix}
\end{align*}
where $\Sigma_{11} = \tau_0^{-1}$ (which uniquely determines the remainder of the partition).  So, the desired posterior is normal (see Greene 2009; p 1014) with mean
\begin{align*}
	\ol{\ol{\mu}}(e,m, \theta_j) &= \mu_0 + \Sigma_{12} \Sigma_{22}^{-1}
			(e - (\mu_0 - x), m - x, \theta_j - \mu_0)\\
			&= \frac {\tau_0 \mu_0 + (\tau_e + \tau_m)(e - x) + \tau_{\theta} \theta_j - \tau_m (m - x)} {\tau_0 + \tau_e + \tau_m + \tau_{\theta}}
\end{align*}
and precision
\[
	(\Sigma_{11} - \Sigma_{12}\Sigma_{22}^{-1}\Sigma_{21})^{-1/2} = \tau_0 + \tau_e + \tau_m + \tau_{\theta}.
\]

\subsection*{Derivation of the Equilibrium Condition}

The indifference condition for a citizen with the cutoff level of regime sentiment $\theta_j = \hat{\theta}(e)$ is:
\[
	\hat{\theta}(e) = \frac {-k_1 - c + k_1 \E[\rho|\hat{\theta}(e); \theta_j = \hat{\theta}(e)]} {b}
\]
What remains to be derived is $\E[\rho|\hat{\theta}(e); \theta_j = \hat{\theta}(e)]$.   The probability of receiving a signal that will lead to protest is $Pr(\theta_j < \hat{\theta}(e))$, so the expected proportion of protesters for as a function of the true popularity $\omega$ and proposed equilibrium threshold is:
\begin{align*}
	\E[\rho|\omega, \hat{\theta}(e)] =  Pr(\theta_j < \hat{\theta}(e)) =
		\Phi\left(\tau_{\theta}^{1/2} (\hat{\theta}(e) - \omega) \right)
\end{align*}
Next we need to derive the belief about $\omega$. Two of these beliefs used below are the \emph{mean interim belief} only conditioned on the public information (here, the election result), which is normal with mean
\begin{align*}
	\ol{\mu}(e) \equiv \frac {\tau_{0} \mu_0 + \tau_e (e - x) }
			{\tau_0 + \tau_e}
\end{align*}
and precision $\tau_0 + \tau_e$. Note that $e \mapsto \ol{\mu}(e)$ is an linear function of $e$, and is invertible except in the degenerate case when $\tau_e = 0$. The posterior belief about $\omega$ given $e$ \emph{and} the private signal $\theta_j$ is normally distributed with mean
\begin{align*}
	\ol{\ol{\mu}}(e,\theta_j) &\equiv \frac {(\tau_0 + \tau_e) \ol{\mu}(e) + \tau_{\theta} \theta_j} {\tau_0 + \tau_e + \tau_{\theta}}
	=  \frac {\tau_0 \mu_0 + \tau_e (e - x)
		 + \tau_{\theta} \theta_j }
			{\tau_0 + \tau_e + \tau_{\theta}}
\end{align*}
and precision $\tau_0 + \tau_e + \tau_{\theta}$. 

Let $\phi(\omega; \mu, \tau)$ be the probability density function of $\omega$ when the belief is normal with mean $\mu$ and precision $\tau$. Then $\E[\rho|\theta_j; \hat{\theta}(e), e]$ becomes:
\begin{align*}
	\E[\rho|\theta_j; \hat{\theta}(e), e] &=	\int_{-\infty}^\infty \Phi\left( \tau_{\theta}^{1/2} (\hat{\theta}(e) - \omega) \right)
	  	\phi(\omega; \ol{\ol{\mu}}(e,\theta_j), \tau_0 + \tau_e + \tau_{\theta}) d \omega\\
		&= \Phi\left( \tilde{\tau}^{1/2} (\hat{\theta}(e) - \ol{\ol{\mu}}(e,\theta_j)) \right)
\end{align*}
Where $\tilde{\tau} = \frac {\tau_{\theta} (\tau_0 + \tau_{e} + \tau_{\theta})} {2 \tau_{\theta} + \tau_{0} + \tau_{e}}$.\footnote{The equality follows from the fact that a normally distributed variable where the mean is unknown and normally distributed is itself normally distributed.} Evaluating this belief for the marginal citizen ($\theta_j = \hat{\theta}(e)$) and plugging into the indifference condition gives the equilibrium condition in the main text.



\subsection*{Proof of Proposition 1}

The expected number of protestors (given the election result $e$) is
\begin{align*}
		\E[\rho|e] &=	\int \Phi\left(\tau_{\theta}^{1/2} (\hat{\theta}(e) - \omega) \right) \phi(\omega; \ol{\mu}(e), \tau_0 + \tau_e) d\omega\\
		&= \Phi\left(\tau_{\rho}^{1/2}(\hat{\theta}(e) -
		\ol{\mu}(e) \right)
\end{align*}
where $\tau_{\rho} = \frac {\tau_{\theta} (\tau_0 + \tau_e)} {\tau_{\theta} + \tau_0 + \tau_e}$. This is now in a convenient form for considering comparative statics. The derivative of the number of protesters with respect to the election result is:
\begin{align*}
	\frac {\partial \E[\rho|\cdot]} {\partial e} &=
		\phi(\cdot) \tau_{\rho}^{-1/2} \left(- \frac {\tau_0 + \tau_e} {\tau_0 + \tau_e + \tau_{\theta}}+ \frac {\partial \hat{\theta}} {\partial e} \right)\\
		&= 	\phi(\cdot) \tau_{\rho}^{-1/2} \left(\frac {\tau_0 + \tau_e} {- \tau_0 + \tau_e + \tau_{\theta}} + \left(-\frac {1/(k_1)} {(b/(k_1)) - \phi(\cdot)\left( \frac {\tilde{\tau} \tau_0} {\tau_0 + \tau_e} \right)} \right) \right) < 0
\end{align*}
(We suppress the argument of the normal pdf since all that matters for the sign is that it is positive.) 
	Part iii follows from implicitly differentiating the equilibrium condition with respect to $c$ and $k_1$, and the fact that the LHS is increasing faster than the RHS at any intersection corresponding to a stable equilibrium. 

\subsection*{Proof of proposition 2}
As derived in the main text, the expected level of protest given $e$ is
\begin{align*}
	\E[\rho|e] &= \Phi\left(\tau_{\rho}^{1/2} \left(\hat{\theta}(e) -
		\frac { \tau_0 \mu_0 + \tau_e (e - x)} {\tau_0 + \tau_e}\right) \right)
\end{align*}
From inspection, this expectation is increasing in $e$. Further, it follows from proposition 1 that $\lim_{e \to \infty} \E[\rho|e] = 0$ and $\lim_{e \to -\infty} \E[\rho|e] =  1$. Since $y^I \in (0,1)$, $Pr(\E[\rho|e] < 1 - y^I) \in (0,1)$. As we can write the equilibrium condition as a function of $\ol{\mu}(e)$, we can write the expected level of protest as:
\begin{align*}
	\E[\rho|\ol{\mu}(e)] &= \Phi\left(\tau_{\rho}^{1/2}(\hat{\theta}_\mu(\ol{\mu}(e)) - \ol{\mu}(e)\right)
\end{align*}
Define $\mu_y$ to be the interim mean belief such that $\E[\rho|\mu_y] = 1 - y^I$, hence the incumbent steps down if and only if $\ol{\mu}(e) < \mu_y$. Setting $e^* = \ol{\mu}^{-1}(\mu_y)$ completes the proof. $\qed$

\subsection*{Proof of propositions 3-4}

To prove these results, write the equilibrium condition as:

\begin{equation}
	\label{eqmab} \alpha + \beta \hat{\theta} = \Phi\left( \frac {\hat{\theta} - \mu_{RHS}} {\sigma_{RHS}}\right)
\end{equation}

Where $\alpha = \frac {c + k_1} {k_1}$, $\beta = \frac {b} {k_1}$, $\mu_{RHS} = \ol{\mu}(e)$, and $\sigma_{RHS} = \frac {\sqrt{\tilde{\tau}} (\tau_0 + \tau_e)} {\tau_0 + \tau_e + \tau_{\theta}}$. We first present intermediate results with this less cumbersome form, and then the results in the main text quickly follow.

\begin{lemma} \label{lem_intersections}
	i. If $\beta > 1/(\sigma_{RHS})$, there is a unique $\hat{\theta}$ satisfying equation \ref{eqmab}.\\
	If $\beta < 1/(\sigma_{RHS} \sqrt{2 \pi})$, then there exists $\mu_l, \mu_h \in \R$, $\mu_l < \mu_h$, such that:\\
ii. If $\mu_{RHS} < \mu_l$ or $\mu_{RHS} > \mu_h$, there is a unique solution to equation \ref{eqmab}.\\
iii. If $\mu_{RHS} \in (\mu_l, \mu_h)$, there are three solutions to equation \ref{eqmab}.\\
iv. If $\mu_{RHS} = \mu_l$ or $\mu = \mu_h$ there are two solutions to equation \ref{eqmab}
\end{lemma}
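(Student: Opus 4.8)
The plan is to count the zeros of $g(\hat\theta)=\Phi\!\left(\frac{\hat\theta-\mu_{RHS}}{\sigma_{RHS}}\right)-\alpha-\beta\hat\theta$, i.e.\ the intersections of the increasing line $\alpha+\beta\hat\theta$ with the scaled, shifted normal CDF on the right-hand side of equation \ref{eqmab}. Two facts anchor the argument. First, since the CDF is bounded in $[0,1]$ while the line has positive slope, $g(\hat\theta)\to+\infty$ as $\hat\theta\to-\infty$ and $g(\hat\theta)\to-\infty$ as $\hat\theta\to+\infty$. Second, $g'(\hat\theta)=\frac{1}{\sigma_{RHS}}\phi\!\left(\frac{\hat\theta-\mu_{RHS}}{\sigma_{RHS}}\right)-\beta$, whose leading term is a Gaussian bell centered at $\hat\theta=\mu_{RHS}$ with peak value $\frac{1}{\sigma_{RHS}\sqrt{2\pi}}$.

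For part i, I would note that $\beta>1/\sigma_{RHS}>\frac{1}{\sigma_{RHS}\sqrt{2\pi}}$ exceeds the peak of the bell, so $g'<0$ throughout; hence $g$ is strictly decreasing, and together with the boundary behavior the intermediate value theorem yields exactly one zero. For the multiplicity regime, $\beta<\frac{1}{\sigma_{RHS}\sqrt{2\pi}}$ forces the bell to exceed $\beta$ on a nondegenerate interval, so $g'$ vanishes at exactly two points, placed symmetrically about $\mu_{RHS}$, namely $\hat\theta_i=\mu_{RHS}+\sigma_{RHS}z_i$ with $z_1=-z^*$, $z_2=z^*$ and $z^*=\sqrt{-2\ln(\beta\sigma_{RHS}\sqrt{2\pi})}$; here $\hat\theta_1$ is a local minimum and $\hat\theta_2$ a local maximum of $g$. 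Combining this with the boundary behavior, the graph of $g$ descends from $+\infty$ to the local minimum, rises to the local maximum, and descends to $-\infty$; since $g$ is strictly monotone on each of the three branches, the number of zeros is determined entirely by the signs of the two extreme values: one zero if they share a sign, three zeros if $g(\hat\theta_1)<0<g(\hat\theta_2)$, and two (a tangency) if either vanishes.

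The crux, and the step I expect to do the real work, is evaluating $g$ at the critical points. Because $z^*$ depends only on $\beta$ and $\sigma_{RHS}$ and not on $\mu_{RHS}$, direct substitution gives
\[
g(\hat\theta_i)=\Phi(z_i)-\beta\sigma_{RHS}z_i-\alpha-\beta\mu_{RHS}=:h_i-\beta\mu_{RHS},
\]
so each extreme value is affine in $\mu_{RHS}$ with the \emph{same} slope $-\beta<0$, and $h_1<h_2$ because the local maximum strictly exceeds the local minimum. Setting $\mu_l=h_1/\beta$ and $\mu_h=h_2/\beta$ (whence $\mu_l<\mu_h$), the sign conditions become threshold conditions on $\mu_{RHS}$: for $\mu_{RHS}<\mu_l$ both extremes are positive and for $\mu_{RHS}>\mu_h$ both are negative, giving the unique solution of part ii; for $\mu_{RHS}\in(\mu_l,\mu_h)$ we have $g(\hat\theta_1)<0<g(\hat\theta_2)$, giving the three solutions of part iii; and $\mu_{RHS}=\mu_l$ or $\mu_{RHS}=\mu_h$ produce the tangent, two-solution case of part iv. The only subtlety worth care is verifying this common-slope affine dependence and the strict ordering $h_1<h_2$, which is precisely what guarantees that the two thresholds exist, are well defined, and are ordered as claimed.
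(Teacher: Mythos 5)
Your proof is correct and takes essentially the same approach as the paper's, which argues graphically by comparing the slope $\beta$ of the line with the maximum slope $1/(\sigma_{RHS}\sqrt{2\pi})$ of the CDF and noting that the derivative of the difference can change sign at most twice (the paper defers the formal proof, stating it is ``available upon request''). You in fact supply exactly the details the paper omits: the explicit critical points $\hat\theta_i = \mu_{RHS} + \sigma_{RHS} z_i$ with $z^* = \sqrt{-2\ln(\beta\sigma_{RHS}\sqrt{2\pi})}$, and the key observation that $g(\hat\theta_i) = h_i - \beta\mu_{RHS}$ is affine in $\mu_{RHS}$ with common slope $-\beta$ and $h_1 < h_2$ (since $g$ is strictly increasing on $(\hat\theta_1,\hat\theta_2)$), which constructs $\mu_l = h_1/\beta$ and $\mu_h = h_2/\beta$ explicitly and delivers parts ii--iv by a clean sign analysis.
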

\begin{proof}
Figure \ref{intersections} in the main text shows the result graphically, a formal proof is available upon request. If the RHS has a low standard deviation as in the left panel, the maximum slope will be less than the slope of the LHS, guaranteeing a unique intersection. If the RHS has a high standard deviation (as in the right panel), there will be multiple intersections if the steep section of the CDF intersects the LHS from below, which occurs for intermediate $\mu_{RHS}$. Since the first derivative of the RHS of equation \ref{eqmab} is a constant and the derivative of the LHS is a normal PDF, the difference between the RHS and the LHS can only change signs twice, which implies there can be no more than three intersections. $\qed$
\end{proof}

Using Lemma \ref{lem_intersections}, we can quickly prove the results in the main text:

\textbf{Proof of proposition 3.} 

The number of equilibria follows from the number of intersections derived above, and the fact that $\mu_{RHS}$ is exactly equal to $\mu_h$ or $\mu_l$ with probability zero. This also implies that the ``low protest equilibrium'' lies lies on the convex part of the RHS (i.e., $\hat{\theta} < \mu_{RHS}$) and the ``high protest equilibrium'' lies on the concave part of the RHS (i.e., $\hat{\theta} > \mu_{RHS}$). Part ii and iii then follow from the definition of $\mu_{RHS}$ and comparing the signs of the derivatives of the LHS and RHS at the intersections as in the proof of the lemma. Part iv follows from implicitly differentiating the equilibrium condition and the fact that the LHS must be increasing faster than the RHS for these equilibria.  $\qed$

\textbf{Proof of proposition 4.} Part i of proposition 4 follows from the fact that $\beta <  1/(\sigma_{RHS})$ is equivalent to $b < \frac {k_1} {\sigma_{RHS}}$, so there are multiple equilibria for some $\mu_{RHS}$ if and only this condition is met. Since any value of $\mu_{RHS}$ can be met for some $e$, there will be multiple equilibria for some election result for $b$ below the critical value. Part ii follows a similar logic: if $\sigma_{RHS}$ is sufficiently low as $\tau_e \to 0$, there will be multiple equilibria for any $\tau_e$. If not, since $\sigma_{RHS} \to 0$ as $\tau_e \to \infty$, there will exist a critical $\tau_e^*$ such that the multiple equilibrium condition is met if and only if $\tau_e > \tau_e^*$.

\subsection*{Discussion of Proposition \ref{informative}}


We would like to draw attention to several caveats to proposition \ref{informative}. First, the more nuanced language in the result about the informativeness of elections is because their may be multiple equilibria even with a completely uninformative election, hence a precise election result is not required for the focal point effect. If this is the case, then there can be equilibria exhibiting the focal point effect even for arbitrarily noisy (i.e., uninformative) election results

Second, it is not necessarily the case that the probability of there being multiple equilibria  is necessarily decreasing in $b$ or increasing in the precision of the monitoring report. So, it is possible that increasing the precision of the monitoring report makes a high protest equilibrium \emph{less} likely for some range of $\tau_e$.

Finally, it is well known that, both in the global games context and in general, adding realistic components such as giving the incumbent private information or repeating the protest stage or entire game will often lead to multiple equilibria even if the baseline model has a unique equilibrium.\footnote{\citet{AHP2006} and \citet{AHP2007} consider global games with signaling and in a repeated setting, respectively. In both cases, these modifications can lead to multiple equilibria.} Our only defense against this concern is that all (formal) theories must make simplifications to clarify potential relationships between ``independent'' variables of interest (here, the informativeness of elections and the polarization of the polity) and outcomes (here, the range of election results with multiple equilibria and hence the possible of rule-based alternation of power). The potential that other channels through which the independent variables affect the outcomes may confound, negate, or even reverse the relationships found in a model is inevitable. We admit it is particularly hard to assess the magnitude of this problem when dealing with abstract notions such as the pervasiveness of multiple equilibria, but see no concrete reason to worry that including signaling or repeated game dynamics would change our central results.

\subsection*{Proof of propositions 5-6}

The derivations in the appendix show how the equilibrium condition and other results can be derived with respect to the interim mean popularity $\ol{\mu}(e)$. With the addition of the monitoring report the interim mean belief is:
\[
	\ol{\mu}(e,m) = \frac {\tau_0 \mu_0 + (\tau_e + \tau_m) (e - x) + \tau_m (m - x)} {\tau_0 + \tau_e + \tau_m}
\]

Both are linearly increasing in $e$ and the belief with the monitoring report is linearly decreasing in $m$.
\begin{lemma}
	\label{fundamental}
	i. We can write the equilibrium threshold rule as the mean interim popularity:  $\hat{\theta}(\ol{\mu}(e))$ or  $\hat{\theta}(\ol{\mu}(e, m))$, and \\
	ii.  In both the baseline model and extension, all of the results referring to the election result can be rewritten with respect to the interim mean popularity
\end{lemma}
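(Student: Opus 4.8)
The plan is to exploit the single structural fact that makes the whole model tractable: under the joint-normality assumptions, the interim posterior belief about $\omega$ has a \emph{precision} that is a fixed function of the precision parameters alone and does not depend on the realized signals. Consequently the interim mean is a sufficient statistic for the public data, and every object entering the equilibrium analysis depends on $e$ (or $(e,m)$) only through it. The proof is therefore essentially an algebraic substitution together with an invertibility remark, rather than any new computation.

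For part i in the baseline model, I would start from the equilibrium condition \ref{eqmfinal} and substitute the identity $\tau_0 \mu_0 + \tau_e(e-x) = (\tau_0 + \tau_e)\,\ol{\mu}(e)$, which is immediate from the definition of $\ol{\mu}(e)$. This collapses the two separate appearances of $e$ inside the argument of $\Phi$ into the single factor $(\tau_0+\tau_e)(\hat\theta - \ol{\mu}(e))$ over the unchanged denominator, i.e.\ precisely the form \ref{eqmab} with $\mu_{RHS}=\ol{\mu}(e)$. Since $e$ then enters the condition only through $\ol{\mu}(e)$, every solution $\hat\theta$ is pinned down by $\ol{\mu}(e)$ alone, so the threshold rule may legitimately be written $\hat\theta(\ol{\mu}(e))$. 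For the monitoring extension I would repeat the identical maneuver using $\ol{\mu}(e,m)$ and the posterior precision $\tau_0+\tau_e+\tau_m+\tau_\theta$ derived in the first appendix subsection; the crucial point, guaranteed by joint normality, is that this precision is again independent of the realized $(e,m)$, so once the posterior mean is routed through $\ol{\mu}(e,m)$ the equilibrium condition depends on the data only through $\ol{\mu}(e,m)$, establishing $\hat\theta(\ol{\mu}(e,m))$.

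For part ii, I would observe that $e\mapsto\ol{\mu}(e)$ is affine and strictly increasing whenever $\tau_e>0$, hence a bijection onto its range with inverse $\ol{\mu}^{-1}$ (the proof of Proposition 2 already relies on this when it sets $e^*=\ol{\mu}^{-1}(\mu_y)$), while $\ol{\mu}(e,m)$ is affine, increasing in $e$ and decreasing in $m$. Because composing any equilibrium quantity — for instance $\E[\rho|e]=\Phi(\tau_\rho^{1/2}(\hat\theta(e)-\ol{\mu}(e)))$, the step-down threshold, or the multiplicity region of Lemma \ref{lem_intersections} — with a monotone bijection preserves both its monotonicity and the existence and count of its solutions, each result stated in terms of $e$ (or $m$) transfers verbatim to a statement in terms of $\ol{\mu}$, with the sign of any monotone dependence inherited from the sign of the relevant partial derivative of $\ol{\mu}$.

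The only point requiring care — and the closest thing to an obstacle — is the degenerate case $\tau_e=0$, where $\ol{\mu}(e)$ is constant in $e$ and the reparametrization map fails to be invertible. But in that case there is no $e$-dependence left to rewrite, so the claim holds vacuously, and I would simply flag this exclusion, consistent with the earlier remark that $\ol{\mu}$ is ``invertible except in the degenerate case when $\tau_e=0$.''
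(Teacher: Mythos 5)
Your proposal is correct and follows essentially the same route as the paper's own proof: rewriting the equilibrium condition so that the public data enter only through the interim mean $\ol{\mu}(e)$ (resp.\ $\ol{\mu}(e,m)$) with signal-independent posterior precision, then invoking the monotone/affine invertibility of $e \mapsto \ol{\mu}(e)$ to transfer all $e$-stated results. Your explicit handling of the degenerate case $\tau_e = 0$ and of how solution counts (hence the multiplicity region) are preserved under the reparametrization is slightly more careful than the paper's sketch, but it is the same argument.
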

\begin{proof}
	Part i follows from doing an analogous derivation of the equilibrium condition when also accounting for the monitoring report. In particular, the citizen posterior belief about the incumbent popularity with the monitoring report and their private signal has mean: 
	\begin{align*}
		\ol{\ol{\mu}}(e,m, \theta_j) &\equiv \frac {(\tau_0 + \tau_e + \tau_m) \ol{\mu}(e, m) + \tau_{\theta} \theta_j} {\tau_0 + \tau_e + \tau_m + \tau_{\theta}} 
		=  \frac {\tau_0 \mu_0 + (\tau_e + \tau_m) (e - x) - \tau_m (m - x)
			 + \tau_{\theta} \theta_j }
				{\tau_0 + \tau_e + \tau_m + \tau_{\theta}}
	\end{align*}
and precision $\tau_0 + \tau_e + \tau_m + \tau_{\theta}$. This gives an analogous equilibrium condition for $\hat{\theta}(e,m)$, which is decreasing in $m$ in the unique equilibrium case (and the high and low protest equilibria with multiple equilibria).
	For part ii, it is clear from definition that $\ol{\mu}(e)$ is an invertible function of the election result $e$. So, as long as there is a unique equilibrium for all $e$ there will be a unique equilibrium for all $\ol{\mu}(e)$ and a one-to-one correspondence between these descriptions of the equilibrium threshold.\footnote{The uniqueness caveat is only necessary because when there are multiple equilibria the mapping from the election result to equilibrium thresholds is not a function.} So all comparative statics with respect to $e$ can be equivalently derived with respect to $\ol{\mu}(e)$ in the baseline model and $\ol{\mu}(e,m)$ with the monitoring report. $\qed$ 
\end{proof}

The proposition follows from the fact lemma \ref{fundamental} and the fact that $\ol{\mu}(e,m)$ is decreasing in $m$.

%

\section*{Stepping Down Before of After Protest}

As in the main model, if the incumbent steps down right away they get a payoff $y^I$ and if they never step down the payoff is $1 - \rho$. If they step down after the protest, they get payoff $y_I - \gamma \rho$, where $\gamma \in (0, y^I)$. 

The citizen behavior at the protest stage is unaffected by this modification. If $I$ does not step down immediately, the choice is between yielding and getting payoff $y_I - \gamma \rho$ and not stepping down and getting payoff $1 - \rho$. Rearranging, $I$ steps down if and only if:
\begin{align}
	\rho > \frac { 1- y^I} {1 - \gamma} \equiv \rho^*
\end{align}

Now consider the decision to step down or not \emph{before} the protest stage. The expected payoff for not stepping down right away is:
\begin{align*}
	u^I_{SF}(e)  
				= Pr(\rho < \rho^*|e) (1 - \E[\rho|\rho<\rho^*, e])
				 + Pr(\rho > \rho^*|e) (y^I - \gamma \E[\rho|\rho>\rho^*, e])
\end{align*}

For arbitrarily low election results, regardless of the citizen strategy the probability that there will be a high level of protest becomes arbitrarily large, hence $ Pr(\rho < \rho^*|e)$ approaches 1 and hence $\lim_{e \to -\infty} u^I_{SF}(e) = y^I - \gamma \E[\rho|\rho<\rho^*, e] < y^I$. So, for sufficiently low election results, the incumbent steps down right away. Similarly, for arbitrarily high election results the incumbent knows there will be virtually no protest and hence $\lim_{e \to -\infty} u^I_{SF}(e) = 1 > y^I$, hence for arbitrarily high election results the incumbent does not step down right away.  

Further, as long as equilibrium threshold rule is monotone in the multiple equilibrium case (i.e., low protest equilibrium if and only if the election result is sufficiently high), the payoff to standing firm initially is decreasing in the election result. So:

\begin{proposition}
	There exists an $e^*_{SF} \in \R$ such that the incumbent steps down immediately after the election if and only if $e < e^*_{SF}$
\end{proposition}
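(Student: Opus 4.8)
The plan is to write the incumbent's payoff from declining to yield immediately as the expectation of a \emph{decreasing} function of the realized protest level, and then to exploit that the conditional law of protest is stochastically monotone in the election result. Once the incumbent declines to step down she faces exactly the post-protest choice already characterized, so her continuation payoff at a realized protest share $\rho$ is $g(\rho) \equiv \max\{1 - \rho, \, y^I - \gamma \rho\}$, and therefore $u^I_{SF}(e) = \E[g(\rho) \mid e]$. She steps down immediately precisely when $y^I > u^I_{SF}(e)$, so the proposition reduces to showing that this set of election results is a lower interval.

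First I would record two elementary facts about $g$. It is continuous, since its two affine pieces agree at the switch point $\rho = \rho^*$; and it is strictly decreasing on $[0,1]$, since both pieces have negative slope ($-1$ and $-\gamma$, with $\gamma < 1$). Hence any first-order stochastic change in $\rho$ that lowers it in the FOSD sense raises $\E[g(\rho)\mid e]$.

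Second, and this is the crux, I would show that the law of $\rho$ given $e$ is first-order stochastically decreasing in $e$. From the appendix, conditional on the true popularity $\omega$ the protest share equals $\Phi(\tau_\theta^{1/2}(\hat{\theta}(e) - \omega))$, while the incumbent's belief about $\omega$ given $e$ is normal with mean $\ol{\mu}(e)$ and fixed precision $\tau_0 + \tau_e$. Since $\ol{\mu}(e)$ is strictly increasing in $e$, raising $e$ shifts the belief about $\omega$ upward in the FOSD sense; simultaneously the equilibrium threshold $\hat{\theta}(e)$ is nonincreasing in $e$ by Proposition \ref{unique} in the unique case and by Proposition \ref{multiple} within the high- or low-protest equilibrium. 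Both effects lower $\Phi(\tau_\theta^{1/2}(\hat{\theta}(e) - \omega))$ for every $\omega$, so $\rho \mid e$ is FOSD-decreasing in $e$.

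Combining the two facts, $u^I_{SF}(e) = \E[g(\rho)\mid e]$ is the integral of a decreasing function against a family of distributions that is FOSD-ordered in $e$, so $u^I_{SF}$ is nondecreasing in $e$; it is continuous wherever the selected threshold rule $\hat{\theta}(\cdot)$ is continuous, and a downward jump in $\hat{\theta}$ at a switch between protest equilibria only produces an \emph{upward} jump in $u^I_{SF}$, so monotonicity is preserved throughout. I would then invoke the two limits already established, $\lim_{e\to-\infty} u^I_{SF}(e) < y^I$ and $\lim_{e\to+\infty} u^I_{SF}(e) = 1 > y^I$: together with monotonicity these force $\{e : u^I_{SF}(e) < y^I\}$ to be an interval of the form $(-\infty, e^*_{SF})$, and taking $e^*_{SF}$ to be its supremum completes the argument. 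The main obstacle is the stochastic-monotonicity step, and within it the only genuinely nontrivial input is the monotonicity of the endogenous threshold $\hat{\theta}(e)$, which is why I rely on Propositions \ref{unique} and \ref{multiple} rather than re-deriving it.
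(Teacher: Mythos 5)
Your proposal is correct and takes essentially the same route as the paper---the two limits $\lim_{e\to-\infty}u^I_{SF}(e)<y^I$ and $\lim_{e\to+\infty}u^I_{SF}(e)=1>y^I$ combined with monotonicity of $u^I_{SF}$ in $e$---with the difference that you actually prove the monotonicity step (writing $u^I_{SF}(e)=\E[g(\rho)\mid e]$ with $g(\rho)=\max\{1-\rho,\, y^I-\gamma\rho\}$ decreasing and showing $\rho\mid e$ is first-order stochastically decreasing in $e$), whereas the paper merely asserts it. The only thing to make explicit is that your claim that switches between protest equilibria produce a \emph{downward} jump in $\hat{\theta}(e)$ presupposes the monotone equilibrium selection (low-protest equilibrium if and only if $e$ is sufficiently high), which is exactly the caveat the paper states as a hypothesis just before the proposition, so it should appear as a stated assumption in your proof as well.
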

\begin{proof}
	This follows from the limiting behavior and the monotonicity of $u^I_{SF}(e)$ in $e$.
\end{proof}

The following result formalizes the idea that there tends to be more protest on the equilibrium path in the unique equilibrium case 

\begin{proposition}
	When the polarization of citizens is arbitrarily small of the election arbitrarily informative,\\
	i) the incumbent always steps down immediately for election results with a high protest equilibrium, and \\
	ii) the probability of stepping down as a result of protest (after winning an election) approaches zero.
\end{proposition}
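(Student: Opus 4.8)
The plan is to reduce the entire statement to one fact about the distribution of the realized protest share $\rho$: in the high‑protest equilibrium $\rho$ concentrates near $1$, and in the low‑protest equilibrium it concentrates near $0$. Granting this, both parts are short payoff computations. For part (i), expand the incumbent's value of not yielding immediately and simplify to
\[
	u^I_{SF}(e)-y^I = \Pr(\rho<\rho^*\mid e)\bigl(1-\E[\rho\mid \rho<\rho^*,e]-y^I\bigr) - \gamma\,\Pr(\rho>\rho^*\mid e)\,\E[\rho\mid\rho>\rho^*,e].
\]
Since $\gamma\in(0,y^I)$ and $y^I\in(0,1)$, the threshold $\rho^*=\tfrac{1-y^I}{1-\gamma}$ lies strictly inside $(0,1)$, so it is bounded away from both endpoints. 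If $\rho\to 1$ in probability along a high‑protest equilibrium, then $\Pr(\rho<\rho^*\mid e)\to0$, $\Pr(\rho>\rho^*\mid e)\to1$, and $\E[\rho\mid\rho>\rho^*,e]\to1$, whence the right‑hand side tends to $-\gamma<0$. Thus $u^I_{SF}(e)<y^I$ and the incumbent strictly prefers to step down immediately, which is (i).

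For part (ii), note that in the ``democratic'' selection the winning election results ($e>e^*_{SF}$, from the preceding proposition) are exactly those on which citizens play the low‑protest equilibrium and the incumbent does not yield at once. On that range the incumbent is pushed out only on the event $\{\rho>\rho^*\}$; if $\rho\to0$ in probability then $\Pr(\rho>\rho^*\mid e)\to0$, so integrating over the winning range the ex ante probability of stepping down because of protest vanishes. On the losing range part (i) already shows the incumbent yields immediately and hence never ``after protest,'' so the total probability of protest‑induced turnover tends to $0$.

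The substantive work is the concentration lemma, which I would prove by first showing $\E[\rho\mid e]\to1$ in the high equilibrium and $\to0$ in the low one, uniformly over eligible $e$; concentration then follows from $\rho\in[0,1]$ (Markov applied to $1-\rho$, respectively to $\rho$), and, in the informative‑election case, additionally because the posterior of $\omega$ given $e$ collapses so that $\rho$ becomes asymptotically deterministic. To control the means I would use the reduced equilibrium condition \eqref{eqmab} written as $A+Bs=\Phi(s)$ with $s=(\hat\theta-\ol{\mu}(e))/\sigma_{RHS}$, $A=\alpha+\beta\ol{\mu}(e)$, $B=\beta\sigma_{RHS}$ and $\beta=b/k_1$. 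As $b\to0$, or as $\tau_e\to\infty$ (so $\sigma_{RHS}\to0$ by the proof of Proposition \ref{informative}), we have $B\to0$. The stationary points of $\Phi(s)-A-Bs$ solve $\phi(s)=B$, i.e.\ $s=\pm s_B$ with $s_B=\sqrt{-2\ln(B\sqrt{2\pi})}\to\infty$; by Lemma \ref{lem_intersections} the largest root obeys $s_h\ge s_B$ and the smallest $s_\ell\le -s_B$ (equality only at the tangencies bounding the multiplicity region). Hence the marginal‑citizen protest level $\Phi(s_h)\ge\Phi(s_B)\to1$, and translating to the analyst's expectation through $\E[\rho\mid e]=\Phi\bigl(\tau_\rho^{1/2}(\hat\theta_h-\ol{\mu}(e))\bigr)$ with $\hat\theta_h-\ol{\mu}(e)=\sigma_{RHS}s_h$ gives $\E[\rho\mid e]\to1$; the bound $s_h\ge s_B$ is independent of $e$, which delivers uniformity. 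The low equilibrium is symmetric.

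The main obstacle is precisely this last translation: the equilibrium condition pins down the \emph{marginal} citizen's protest level $\Phi(s_h)$, but the incumbent's payoff depends on the \emph{analyst's} expectation $\E[\rho\mid e]=\Phi(\tau_\rho^{1/2}\sigma_{RHS}\,s_h)$, and the two scalings differ. In the polarization limit $b\to0$ the constants $\tau_\rho$ and $\sigma_{RHS}$ are fixed, so $\hat\theta_h-\ol{\mu}(e)=\sigma_{RHS}s_h\to\infty$ and $\E[\rho\mid e]\to1$ cleanly and uniformly up to the multiplicity boundary; this is the case I would write in full. The delicate regime is $\tau_e\to\infty$, where $\sigma_{RHS}\to0$ while $s_h\to\infty$, so the product $\sigma_{RHS}s_h$ must be controlled directly. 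There I would instead invoke the automatic concentration of $\rho$ (degenerate posterior of $\omega$) together with the weaker conclusion $\E[\rho\mid e]>\rho^*$, which is all the payoff arguments of the first two paragraphs actually require, and expect the uniform statement to follow by the same Proposition \ref{informative} machinery.
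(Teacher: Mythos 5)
Your skeleton is exactly the paper's: its entire proof consists of asserting that in the high-protest equilibrium $\E[\rho|e]\to 1$ and $\Pr(\rho>\rho^*|e)\to 1$ (and the mirror image in the low-protest equilibrium), and then leaving the payoff comparison implicit. Your $b\to 0$ branch is complete and supplies everything the paper omits: the identity $u^I_{SF}(e)-y^I=\Pr(\rho<\rho^*|e)(1-\E[\rho|\rho<\rho^*,e]-y^I)-\gamma\Pr(\rho>\rho^*|e)\E[\rho|\rho>\rho^*,e]$ is correct, $\rho^*\in(0,1)$ does follow from $\gamma\in(0,y^I)$, Markov on $1-\rho$ legitimately upgrades $\E[\rho|e]\to1$ to concentration, and the geometry of the reduced condition $A+Bs=\Phi(s)$ checks out: the stationary points are at $\pm s_B$ with $s_B=\sqrt{-2\ln(B\sqrt{2\pi})}$, the largest root satisfies $s_h\ge s_B$ throughout the region where the high-protest branch exists (including the unique-equilibrium region at low $e$, where the single root lies on the far decreasing branch), and since $\tau_\rho$ and $\sigma_{RHS}$ are fixed as $b\to0$, $\E[\rho|e]=\Phi(\tau_\rho^{1/2}\sigma_{RHS}s_h)\to1$ uniformly over eligible $e$. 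In that regime your argument is strictly more rigorous than the paper's.

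The genuine gap is the $\tau_e\to\infty$ branch, and your proposed repair does not close it, because the premise you import --- ``$\sigma_{RHS}\to0$ by the proof of Proposition \ref{informative}'' --- is false (it is an error in the paper's own proof of that proposition). Computing directly: $\tilde{\tau}=\tau_\theta(\tau_0+\tau_e+\tau_\theta)/(2\tau_\theta+\tau_0+\tau_e)\to\tau_\theta$, so the slope scale of the RHS satisfies $\sigma_{RHS}\to\tau_\theta^{-1/2}>0$, and since $\tau_\rho\to\tau_\theta$ one gets $\tau_\rho^{1/2}\sigma_{RHS}\to1$. The good news is that your marginal-citizen-versus-analyst translation worry dissolves: the two scalings coincide in the informative limit. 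The bad news is that $B=\beta\sigma_{RHS}\to b/(k_1\sqrt{\tau_\theta})>0$ for fixed $b$, so $s_B$ stays \emph{bounded}, and near the upper boundary of the multiplicity window the high-protest equilibrium level is only about $\Phi(s_B)$, bounded away from $1$. Hence $\E[\rho|e]\to1$ fails uniformly over election results with a high protest equilibrium, and part (i) of the proposition, taken literally in the informative-only limit with fixed $b>0$, needs a supplementary hypothesis.

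Your fallback --- $\rho$ becomes deterministic given $e$ because the posterior of $\omega$ collapses, so it suffices that the protest level clears the relevant cutoff --- is the right shape, but note that with deterministic $\rho=r$ the stand-firm payoff is $\max\{1-r,\;y^I-\gamma r\}$, so the relevant cutoff for immediate yielding is $r>1-y^I$, not $\rho^*$; and the guaranteed lower bound $r\ge\Phi(s_B)$ exceeds $1-y^I$ only under a joint parametric restriction on $b/k_1$, $\tau_\theta$, and $y^I$ (or by taking $b\to0$ along with $\tau_e\to\infty$). So you should either add that condition explicitly or prove only the $b\to0$ statement; the paper's two-line proof papers over exactly this point, and your instinct that this regime is delicate was correct --- it is in fact where the published argument breaks.
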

\begin{proof}
	Part i follows from the fact that as $b \to 0$ and as $\tau_e \to \infty$, $\E[\rho|e] \to 1$ and $Pr(\rho>\rho^*) \to 1$ for any $e$ where the citizens play a high protest equilibrium. Part ii follows from the fact that as $b \to 0$ and as $\tau_e \to \infty$, $\E[\rho|e] \to 0$ and $Pr(\rho>\rho^*) \to 0$ for any $e$ where the citizens play a low protest equilibrium. $\qed$
\end{proof}

\bibliographystyle{apsr}
\bibliography{ffbib}
\end{document}